\renewcommand{\baselinestretch}{1.05}
\newcommand\bzero{{\bf 0}}
\newcommand\bh{{\bf  h}}
\newcommand\bi{{\bf i}}
\newcommand\bj{{\bf j}}
\newcommand\bk{{\bf k}}
\newcommand\bl{{\bf l}}
\newcommand\bn{{\bf n}}
\newcommand\bq{{\bf q}}
\newcommand\bs{{\bf s}}
\newcommand\bt{{\bf t}}
\newcommand\bz{{\bf z}}
\newcommand\mbN{{\mathbb N}}
\newcommand\mbR{{\mathbb R}}
\newcommand\mbZ{{\mathbb Z}}
\newcommand\cF{{\mathcal F}}
\newcommand\cK{{\mathcal K}}
\newcommand\cL{{\mathcal L}}
\newcommand\cN{{\mathcal N}}
\newcommand\cO{{\mathcal O}}
\newcommand\cS{{\mathcal S}}
\newcommand\E{{\mathbb E}}
\renewcommand{\i}{\mathrm{i}}
\newcommand{\norm}[1]{\left\|{#1}\right\|}
\newcommand{\ip}[2]{\left\langle {#1} , {#2} \right\rangle}
\newcommand{\hsnorm}[1]{\left\|{#1}\right\|_{\mathcal{S}}}
\DeclareMathOperator{\krpr}{\widetilde \otimes}
\DeclareMathOperator{\krprt}{\widetilde \otimes_\top}
\DeclareMathOperator*{\tr}{Tr}
\begin{document}

\title{Testing normality of spatially indexed functional data}

\author{Siegfried H{\"o}rmann\\ {\small Technische Universit{\"a}t Graz}
\and Piotr Kokoszka \\ {\small Colorado State University}
\and Thomas Kuenzer\\ {\small Technische Universit{\"a}t Graz}}

\vspace{3cm}

\date{\today}
\maketitle

\begin{abstract}
We develop a test of normality for spatially indexed functions.
The assumption of normality is common in spatial statistics,
yet  no significance tests, or other means of assessment,
have been available for functional data. This paper aims at filling this
gap in the case of functional observations on a spatial grid.
Our test compares the  moments of the spatial (frequency domain)
principal component scores to those  of a suitable Gaussian distribution.
Critical values can be readily obtained from a chi-squared distribution.
We provide rigorous  theoretical justification  for
a broad class of weakly stationary functional random fields.
We perform simulation studies  to assess the the power of the test
against various  alternatives.
An application to Surface Incoming Shortwave Radiation illustrates
the practical value of this procedure.
\end{abstract}

\begin{center}
{\Large
Tester la normalit{\'e} de donn{\'e}es fonctionnelles index{\'e}es spatialement}
\end{center}
\bigskip
{\small
Nous d{\'e}veloppons un test de normalit{\'e} pour les fonctions index{\'e}es
spatialement. Bien que l'hypoth{\`e}se de normalit{\'e} soit courante dans le
domaine de la statistique spatiale, aucun test de significativit{\'e} n'est
disponible dans le cadre de donn{\'e}es fonctionnelles. Ce travail vise {\`a}
combler cette lacune dans le cas de donn{\'e}es fonctionnelles observ{\'e}es
sur une grille spatiale. Notre test compare les moments des scores issus
de l'analyse en composantes principales spatiale (domaine fr{\'e}quentiel) {\`a}
ceux d'une distribution gaussienne appropri{\'e}e. Les valeurs critiques
peuvent {\^e}tre facilement obtenues {\`a} partir d'une distribution du khi-deux.
Un cadre th{\'e}orique rigoureux justifie notre m{\'e}thodologie de test pour une
large classe de champs al{\'e}atoires fonctionnels faiblement stationnaires.
Des simulations sont r{\'e}alis{\'e}es afin d'{\'e}valuer la puissance du test suivant
diff{\'e}rentes hypoth{\'e}ses alternatives.  Une application concernant les
rayonnement {\`a} ondes courtes entrant de surface
 illustre le comportement en pratique de cette proc{\'e}dure.
}

\newpage

\section{Introduction}

Over the last two decades, there has been increasing interest in
functional data, where observations are regarded as elements of a suitable
function space. Several monographs and textbooks give accounts of various
aspects of this field, e.g. \citetext{bosq:2000},
\citetext{ramsay:silverman:2005}, \citetext{shi:choi:2011},
\citetext{HKbook}, \citetext{hsing:eubank:2015} and \citetext{KRbook}.
While studies  of  random samples of functions continue to dominate, the
last decade has seen growing development of functional
data analysis for dependent data. Contributions are becoming more and more
numerous, so we list just a handful of them, without any claim on relative importance.
Within the field  time series analysis,
the emphasis has been on forecasting and inference for temporal dependence, see e.g.
\citetext{hyndman:shang:2009},
\citetext{hormann:kokoszka:2010}, \citetext{liebl:2013},
\citetext{horvath:kokoszka:rice:2014},
\citetext{aue:norinho:hormann:2015}, \citetext{zhang:2016}.
In the ambit of spatial statistics, chief research directions have been
kriging and inference for the spatio--temporal dependence structure. Several
review papers and collections are available, e.g.
\citetext{delicado:2010}, \citetext{mateu:giraldo:2020}
and \citetext{martinez:2020}. Spatial functional data can be regarded as
a type of spatio-temporal data; at each location, we observe a function,
generally defined on a time domain. Temperature or precipitation curves
at spatial locations offer well-known examples, but there are many more.
Most data observed from satellite measurements and outputs of
computer climate models can be treated either as
a temporal sequence of spatial fields or as a field of temporal functions.
We adopt the latter modeling approach.
In either case, the observations are available on a regular spatial grid.

The assumption of Gaussianity has been utilized much more extensively
and profoundly in spatial statistics than in any other field of statistics.
This is chiefly due to the well-established  use of covariance function modeling;
covariances determine distribution only for Gaussian data,
see \citetext{gelfand:schliep:2016} for a broader perspective.
Still, non-Gaussian data seems to be widespread in many contexts.
An overview of non-Gaussianity in climatology was provided by
\citetext{perron:sura:2013}. They investigate
atmospheric key variables observed over several decades and
come to the conclusion that Gaussianity is quite rare in the atmosphere.

Somewhat surprisingly, tests of normality of spatial data have been
absent. Even the application of exploratory tools, like QQ-plots is
questionable because they are justified only if the observations form
a random sample.
\citetext{horvath:kokoszka:wang:2020} derived a normality test for a scalar
spatial field,  which falls to a broad Jarque-Bera family of tests. It is
based on the asymptotic distribution of suitably defined
skewness and kurtosis. In case of spatially dependent data,
these statistics must be defined differently than for random samples.
(Earlier related contributions include \citetext{shenton:bowman:1977},
\citetext{jarque:bera:1980,jarque:bera:1987}.
\citetext{lobato:velasco:2004}, \citetext{bai:ng:2005} and
\citetext{doornik:hansen:2008}.)

This paper is concerned with observations that are functions with domain
$\mathcal{U}$, collected at spatial locations $\bs\in \mathbb{Z}^d$, one function
at each location.
The domain $\mathcal{U}$ can be  a time domain (as in our real data example),
but could also be some other continuous domain, like altitude.
Then $X_\bs(u)$ may be, for example,
the air-temperature at location $\bs$ at time (or altitude) $u$.
No normality tests are currently available for such data, to the best of our knowledge.
Our objective is to fill this gap. The need for such a test arises in many contexts.
For example, \citetext{liu:ray:hooker:2017} developed tests of spatio-temporal
separability and isotropy of spatial functional data, which rely on the assumption
of these data being normal. The same is true of the separability test of
\citetext{constantinou:kr:2017}.
\citetext{gromenko:kokoszka:2017miss} assumed normality to derive
a test for the presence of a common temporal trend in a sample of
spatially indexed functions. Tests of normality of functional random samples
are derived and compared in \citetext{gorecki:hk:2020}, whereas those
for functional time series in \citetext{gorecki:hhk:2018}.

Our approach is based on the decomposition
of a functional spatial field recently derived by
\citetext{kuenzer:hormann:kokoszka:2020}. It uses spatial (frequency domain)
functional principal components analysis (SFPCA) to decompose the functional spatial random field
into $p$ fields of SFPC scores that are orthogonal at all spatial lags.
Under normality, this orthogonality implies that the functional data are actually decomposed
into $p$ layers of independent scalar random fields, each of which is again Gaussian.
Therefore, the testing procedure breaks down
the infinite-dimensional concept of Gaussianity of functions
into testing $p$ independent scalar random fields for Gaussianity.
The number $p$ represents the level of dimension reduction. It is generally a
small, single digit number, often 2, 3 or 4. At each of these levels we
apply a test of normality based on skewness and kurtosis of suitably
defined spatial fields. This is the simplest and most commonly used approach
that turns out to work well.

The remainder of the paper is organized as follows. After presenting
the required background in Section~\ref{s:prelim}, we provide
a self-contained description of our test in Section~\ref{s:derivation}.
Section~\ref{s:aj} is dedicated to its asymptotics justification, with
the proof collected in the appendix. Finite sample performance is investigated
in Section~\ref{s:sim}.
The paper concludes with an application to Surface Incoming Shortwave Radiation
in Section~\ref{s:sis}.

\section{Preliminaries} \label{s:prelim}
Before we formulate the test, we need to introduce the notation
and the framework in which we operate.
We consider functions defined  on a spatial grid in a Euclidean space
of dimension $d$. The functions live in the space  $H= L^2([0,1])$,
the set of square integrable functions on the interval $[0,1]$, with
the usual inner product and norm. The interval $[0,1]$ is considered only
for the convenience of notation; it can be replaced by any other interval.
A functional random field is then an infinite collection
of random functions,  $(X_\bs)_{\bs \in \mbZ^d}$,
where for each  $\bs \in \mbZ^d$, $X_\bs \in H$.
This means that at  each grid point $\bs$ we have a
curve $X_\bs(u)$, $u \in [0,1]$. In most applications,  the variable
$u$ is rescaled time. We assume throughout the paper, that each
function is square integrable, i.e.
\[
E \lnorm X_\bs \rnorm^2 = E \int_0^1 X_\bs^2(u) du < \infty.
\]
Under this assumption, we define a Gaussian functional random field as
follows.

\begin{definition} \label{def:gaussianity}
A functional random field $(X_\bs)_{\bs \in \mbZ^d}$ is called
\emph{Gaussian} if for all $n \in \mbN$,
any deterministic elements $\{v_1, \dots, v_n\} =: V \subset H$ and
any grid points $\{\bs_1, \dots, \bs_n\} =: S \subset \mbZ^d$,
the projections $\ip{X_{\bs_i}}{v_i}$ are normally  distributed, i.e.
\begin{equation}
(\ip{X_{\bs_1}}{v_1},\ldots,\ip{X_{\bs_n}}{v_n})^\top
\sim \cN_n(\mu_{S,V}, \Sigma_{S,V}),
\end{equation}
where $\mu_{S,V}$ and $\Sigma_{S,V}$ depend on the sets $S$ and $V$.
\end{definition}
We note that there are several equivalent definitions of normality
in a Hilbert space. See e.g.\ Chapter~7 of \citetext{laha:roghatgi:1979}.

We now define a stationary functional random field.

\begin{definition} \label{def:stationary}
A functional random field $(X_\bs)_{\bs \in \mbZ^d}$ is called weakly
stationary, if
\begin{enumerate}[(i)]
\item for all~$\bs \in \mbZ^d$, $\E X_\bs = \E X_\bzero$;
\item  for all $\bs, \bh \in \mbZ^d$ and $u, v \in [0,1]$
\[
c_\bh(u,v) := \cov(X_{\bh}(u),X_{\bzero}(v))
= \cov(X_{\bs + \bh}(u),X_{\bs}(v)).
\]
\end{enumerate}
\end{definition}

Observe that the kernel $c_\bh$ is Hilbert--Schmidt, i.e.\
$\iint c_\bh^2(u,v) du dv < \infty$. For any Hilbert--Schmidt kernel $\psi$,
we define the corresponding operator $\Psi$ on $H$ by
$\Psi( y) = \int \psi(u, v) y(v) dv$, $y\in H$.
The integral operator defined by the autocovariance kernel $c_\bh$ is
thus denoted by $C_\bh$.

Next we turn to the concept of SFPCA, which relies on frequency domain concepts. In particular, we  need the so-called spectral density operator. For a weakly stationary functional
random field the integral operator $\cF_{\btheta}^X$ with  the kernel
\begin{equation}\label{d:sd}
f^X_{\btheta} (u, v) := \frac{1}{(2\pi)^d}
\sum_{h \in \mbZ^d} c_\bh(u,v) e^{-\i \bh^\top \btheta}, \qquad
\btheta \in [-\pi,\pi]^d,
\end{equation}
is called the spectral density operator of
$(X_\bs)$ at the spatial frequency $\btheta$.

To ensure convergence of the infinite series in \eqref{d:sd},
we impose the following assumption.

\begin{assumption} \label{ass:setup} The field
$(X_\bs)_{\bs\in\mbZ^d}$ is  weakly stationary
with mean zero and absolutely summable autocovariances in the sense that
\begin{equation} \label{e:summableC}
\sum_{\bh \in \mbZ^d} \tr( C^X_\bh ) < \infty,
\end{equation}
where $\tr( \cdot )$  denotes the trace norm
defined as the sum of the singular values of the operator.
\end{assumption}

Exponentially  decaying  autocovariances satisfy \refeq{summableC},
but it admits  slower decay. Under Assumption~\ref{ass:setup},
the theory of \citetext{kuenzer:hormann:kokoszka:2020} is applicable.
We now outline its  elements we need for the development of the normality test.

The SFPC scores $Y_{m,\bs}$ and the filter functions $\phi_{m,\bl}$
are defined using the eigensystem of the spectral density operator.
Let $\lambda_1(\btheta) > \lambda_2(\btheta) > \dots > 0$
be the ordered eigenvalues of $\cF_{\btheta}^X$ and
$\varphi_m(. | \btheta)$ be the corresponding eigenvectors.
The level $m$ SFPC scores are defined by
\begin{equation} \label{e:scores}
Y_{m,\bs} := \sum_{\bl \in \mbZ^d} \ip{X_{\bs-\bl}}{\phi_{m,\bl}}, \ \ \
\bs \in \mbZ^d,
\end{equation}
where $\phi_{m,\bl}$ is defined by
\begin{equation} \label{e:phiml}
\phi_{m,\bl}(u) := \frac{1}{(2\pi)^d} \hspace{-3pt}
\int\limits_{[-\pi,\pi]^d} \hspace{-7pt}
\varphi_m(u | \btheta) e^{-\i \bl^\top \btheta} d\btheta.
\end{equation}

The SFPC score fields $(Y_{m,\bs}\colon \bs\in\mathbb{Z}^d)$, $1\leq m \leq p$, are then orthogonal
in the sense that
\begin{equation} \label{e:zero-cor}
\cov(  Y_{m,\bs},  Y_{n,\bs^\prime} ) = 0, \ \
{\rm if} \ m \neq n, \ \ \forall\  \bs, \bs^\prime \in \mbZ^d.
\end{equation}

The usual functional principal component (FPC) scores are defined by $\xi_{m, \bs} = \lip X_\bs, v_m \rip$,
where $v_m$ is  the $m$-th FPC. They are uncorrelated at each location,
i.e $\cov(\xi_{m, \bs}, \xi_{n, \bs}) = 0$, if $m\neq n$.
 The analog of \refeq{zero-cor} does
not hold. As we will see in Section~\ref{s:derivation}, it is property \refeq{zero-cor}
that allows us to construct our normality test.
The scores $Y_{m,\bs}$, obtained with the spatial FPCA,
 define a spatial field for each $m$.
They  take into account
data from neighboring spatial locations $\bs$ (in theory all $\bs$)
via a spatial filter $(\phi_{m,\mathbf{l}}(u)\colon \mathbf{l}\in\mathbb{Z}^d)$.
The filters are chosen in such a way  that scores $(Y_{m,\bs}\colon \bs\in\mathbb{Z}^d)$
from different ``layers'' $m$  become mutually orthogonal (uncorrelated).
Under Gaussianity it implies that $(Y_{m,\bs}\colon \bs\in\mathbb{Z}^d)$
and $(Y_{m',\bs}\colon \bs\in\mathbb{Z}^d)$ are independent fields for
$m\neq m'$. One should contrast this property with the
scores in the usual Karhunen-Lo{\'e}ve expansion,
for which we could only conclude that $\xi_{m,\bs}$
is independent of $\xi_{m^\prime,\bs}$ but not necessarily
from $\xi_{m^\prime,\bs^\prime}$ at a different location $\bs^\prime$.

The population scores $Y_{m,\bs}$ can be approximated by their
sample counterparts $\widehat{Y}_{m,\bs}$. The construction of the
sample scores $\widehat{Y}_{m,\bs}$  involves
several steps which are explained in Sections~3.2 and 3.3 of
\citetext{kuenzer:hormann:kokoszka:2020}. In a nutshell,
the spectral density estimator  ${\cF}^X_{\btheta}$
is estimated by a suitably constructed estimator
$\widehat{\cF}^X_{\btheta}$, and the steps listed above are applied
to $\widehat{\cF}^X_{\btheta}$ in place of ${\cF}^X_{\btheta}$, with
infinite sums replaced by truncated sums. The estimated scores are thus defined by
\begin{align}\label{e:Yhat}
\widehat{Y}_{m,\bs}
:= \sum_{\norm{\bl}_\infty \leq L} \ip{X_{\bs-\bl}}{\hat{\phi}_{m,\bl}},
\ \ \ 1 \leq m \leq p,
\end{align}
where $\hat{\phi}_{m,\bl}$ are the Fourier expansion coefficients
of the eigenvectors $\hat \varphi_{m}(\btheta)$ of $\widehat{\cF}_{\btheta}^X$.
While the exact choice of this estimator is not crucial to our method,
we define the estimated spectral density by
\[
\widehat{\cF}_{\btheta}^X := \frac{1}{(2\pi)^r} \sum_{\bh} w_{\bq}(\bh) \; \widehat{C}_\bh \; e^{-\i  \bh^\top \btheta},
\]
where $\widehat{C}_\bh$ are the usual sample autocovariance operators at lag $\bh$,
$w_{\bq}$ is a weight function and $\bq$ is a vector of positive window sizes.
There are different possibilities for the choice of the weight function.
For our calculations, we used the Bartlett kernel
$w_{\bq}(\bz) = \left( 1 - \norm{\bz / \bq} \right)_+$.

\section{Description  of the test} \label{s:derivation}
We will use the sample moments of a generic scalar field
$(Z_\bs)_{\bs \in \mbZ^d}$
observed on a domain  $R_\bn\subset \mbZ^d$ with cardinality $|R_\bn| = N$.
We defined  them by
\[
\hat m_k^Z := \frac{1}{N} \sum_{\bs \in R_\bn} Z_\bs^k.
\]
The Jarque--Bera test compares sample skewness and kurtosis of a distribution
with the corresponding values of a normal distribution.
Let $(Z_\bs)_{\bs\in\mbZ^d}$ be a stationary scalar random field on a grid.
The mean of $Z_\bs$ is denoted by $\mu = \E Z_\bs$ and
the $k$-th central moment is $\mu_k = \E[(Z_\bs - \mu)^k]$.
The variance, as a special case, is denoted by $\sigma^2 = \mu_2$.
Skewness and kurtosis are defined by
\begin{align*}
\tau = \frac{\mu_3}{\sigma^3} \quad \text{ and } \quad \kappa = \frac{\mu_4}{\sigma^4}.
\end{align*}
These parameters can be estimated by the corresponding sample moments.
We call the resulting estimators, $\hat \tau$ and $\hat \kappa$, the
sample skewness and kurtosis.
If the $Z_\bs$ are i.i.d., then the standard Jarque--Bera test
is based on the convergence
\begin{equation} \label{e:jbtest}
\text{JB}_N := N \left( \frac{\hat \tau ^2}{6} + \frac{(\hat \kappa - 3)^2}{24} \right)
   \convd \chi_2^2,
\end{equation}
which holds under the null hypothesis of normality.
Under spatial dependence,  convergence \refeq{jbtest} no longer holds, as explained in
\citetext{horvath:kokoszka:wang:2020}.

For the observed functional field $(X_\bs)_{\bs \in R_\bn}$,  we proceed as follows.
For $1 \le m \le p$, we compute the estimated score fields
$(\widehat{Y}_{m, \bs})$. We center each field and obtain
\[
\widehat Z_{m,\bs} = \widehat{Y}_{m, \bs}  - \hat m^{\widehat{Y}_m}_1,
\ \ \ \ \bs \in R_\bn.
\]
Next, we compute for the levels $m\in\{1,\ldots, p\}$ the statistics related to sample skewness and kurtosis, which
are defined by
\begin{equation} \label{e:SK}
\widehat \cS^{(m)}_\bn := \sqrt{N} \, \hat m_3^{\widehat{Z}_m}, \ \ \
\widehat \cK^{(m)}_\bn  := \sqrt{N}  \left( \hat m_4^{\widehat{Z}_m} - 3
(\hat m_2^{\widehat{Z}_m})^2 \right).
\end{equation}
Finally, we define the test statistic by
\begin{align}\label{e:teststatistics}
\widehat T_p := \sum_{m=1}^p \widehat J_m,
\qquad \text{ with } \qquad
\widehat J_m :=
\frac{ ( \widehat \cS_\bn^{(m)} )^2 }{ 6\, \hat \sigma_{\cS, m}^2 }
+
\frac{ ( \widehat \cK_\bn^{(m)} )^2 }{ 24\, \hat \sigma_{\cK, m}^2 }.
\end{align}
The variance estimators
$\hat \sigma_{\cS, m}^2$ and $\hat \sigma_{\cK, m}^2$
are defined by
\begin{equation}  \label{e:sigmahat}
\hat \sigma_{\cS, m}^2 =
\sum_{\norm{\bl}_\infty \leq L^\prime} \hspace{-5pt} \hat \gamma_{m,\bl}^3,
\quad \text{ and } \quad
\hat \sigma_{\cK, m}^2 =
\sum_{\norm{\bl}_\infty \leq L^\prime} \hspace{-5pt} \hat \gamma_{m,\bl}^4,
\end{equation}
with the sample autocovariances defined by
\[
\hat \gamma_{m, \bh} =
\frac{1}{N} \sum_{\bs\in M_{\bh,\bn}}
 ( \widehat Y_{m,\bs+\bh} - \hat m_1^{Y_m} )
 ( \widehat Y_{m,\bs} - \hat m_1^{Y_m} ).
\]
The set $M_{\bh,\bn}$ is defined as the set of the locations for which
$\bs \in R_\bn$ and $\bs+\bh\in R_\bn$.  The truncation parameter $L^\prime$
is discussed below.
We will see in Section~\ref{s:aj} that  the test statistic $\widehat T_p$ is asymptotically
$\chi^2_{2p}$-distributed under the null.

We conclude this section with an algorithmic description of the test, which
contains guidance on the choice of the tuning parameters and
suitable {\tt R} functions. We use our package {\tt fsd.fd} that is available
on Github. The following recommendations reflect
our experience based on extensive  numerical experiments.

\begin{enumerate}[(1)]
\item In a first step, the spectral density is estimated ({\tt fsd.spectral.density}) on a suitable equidistant integration grid.
The window size parameter $\bq \in \mbN^d$ can
be selected according to a suitable rule of thumb, such as $q_i = \sqrt{n_i}$.
However, we recommend using the automatic, data-driven procedure described
in Section~III of \citetext{kuenzer:hormann:kokoszka:2020}.

\item Using the function {\tt fsd.spca.var}, we obtain an estimate of the
portion of variability that the single layers of SFPC scores explain.
We then choose $p$ such that the first $p$ SFPCs explain
at least 85\% of the total variability of the functional data.

\item For the computation of the filter functions, we use
{\tt fsd.spca.filters} with the parameters {\tt Npc = }$p$ and
The maximum lag {\tt L} to calculate the SFPC filters and scores
is chosen as the smallest integer $L$ such that the filter functions
reach at least 95\% of the total weight,
i.e. $\sum_\bl \norm{\phi_{1,\bl}}^2 \geq 0.95$.

\item We apply the filter functions to the functional random field
({\tt fsd.spca.scores}) to obtain the SFPC scores.

\item Finally, we can use {\tt fsd.jb.test} to conduct the test.
For this, we supply as argument {\tt X.spca} a list with the entry {\tt scores}
and set {\tt var.method = "direct"}.
Regarding the choice of $L^\prime$,
i.e.\ the maximum lag of the SFPC score autocovariances
for the estimation of $\sigma_\cS^2$ and $\sigma_\cK^2$,
we recommend setting the argument {\tt L = }$\norm{\bq}_\infty$
in order to capture enough covariance.
This selection criterion stems from the fact that the
spatial dependence structure of the original data $(X_\bs)$
is mostly reflected in the SFPC scores.
\end{enumerate}

It is possible to shorten the procedure by using the function
{\tt fsd.spca} with suitably preselected arguments and then
follow up with step (5).
The levels of 85\% in step (2) and of 95\% in step (3) are unrelated
and somewhat arbitrary. The first is related
to the variability of the data explained by SFPCs,
the second to the numerical accuracy of the approximation
of the filters.  The level of 85\% in step (2) is fairly standard
in FDA and works well in the context of this paper. The level of 95\% in step (3)
is also typical and turns out to work well in our simulations and applications.

\section{Asymptotic justification} \label{s:aj}
As noted at the end of Section~\ref{s:prelim}, the starting point to the
implementation of the test is an estimator of the spectral density
operator. We need only the following weak assumption, which
is satisfied by the estimator proposed by \citetext{kuenzer:hormann:kokoszka:2020}
for broad classes of functional random fields.

\begin{assumption} \label{ass:Fhat}
The estimator $\widehat{\cF}_{\btheta}^X$ satisfies
\[
\int\limits_{[-\pi,\pi]^d} \hspace{-4pt} \E
\lnorm \widehat{\cF}^X_{\btheta} - {\cF}^X_{\btheta} \rnorm_\cL
 d\btheta \to 0,
 \]
 where $\lnorm \ \cdot \ \rnorm_\cL$ is the operator norm.
\end{assumption}

The next assumption, also used in \citetext{kuenzer:hormann:kokoszka:2020},
is needed to ensure the identifiability and the convergence of the SFPC estimators.

\begin{assumption} \label{ass:alphas}
Let $\alpha_m(\btheta)$ be the spectral gaps, i.e.
\[
\alpha_m(\btheta) := \min\{
\lambda_{m}(\btheta) - \lambda_{m+1}(\btheta) ,
\lambda_{m-1}(\btheta) - \lambda_{m}(\btheta) \}.
\]
We assume that
for all $1 \le m \le p$,
the spectral gaps are bounded from below, such that
\[
\inf \limits_{\btheta \in [-\pi,\pi]^d} \alpha_m(\btheta) =: \beta_m > 0.
\]
\end{assumption}

The final assumption on the population quantities refers to the summability of the
filter functions.

\begin{assumption} \label{a:S3}
For all $1 \le m \le p$, the filter functions of the SFPCs
are absolutely summable in the sense that
$
\sum_{\bl\in\mbZ^d} \norm{\phi_{m,\bl}} < \infty.
$
\end{assumption}

Next we turn to the assumption on the sampling region.

\begin{assumption} \label{a:Rn}
The sampling region is the rectangle
\[
R_\bn = \{\bs \in \mbZ^d:\; 1 \leq s_i \leq n_i \; \forall\, 1\leq i \leq d \}
\]
such that $\min_{1 \le i \le d} n_i \to \infty$.
\end{assumption}

Recall that the number of locations in $R_\bn$ is denoted by $N$, and
the index $\bn$ is used to identify this expanding spatial domain.
Assumption~\ref{a:Rn} could be replaced by a more complex technical assumption,
but it simplifies arguments  and is generally satisfied
in applications.

Asymptotic results are stated in terms of the following
quantities:
\begin{equation} \label{e:GH}
\quad G(N)  := \hspace{-5pt} \int\limits_{[-\pi,\pi]^d} \hspace{-5pt}
 \lnorm \widehat{\cF}^X_{\btheta} - {\cF}^X_{\btheta} \rnorm_\cL \hspace{-5pt}
 d\btheta, \ \ \ \
H_m(L)  := \Bigg( \sum_{\norm{\bl}_\infty > L} \hspace{-5pt}
 \norm{\phi_{m,\bl}}^2 \Bigg)^{1/4}.
\end{equation}

Our main asymptotic result, Theorem~\ref{t:main},
 states that the asymptotic null distribution
of the test statistic $\widehat T_p$  defined by \refeq{teststatistics} is chi-square
with $2p$ degrees of freedom. Recall that $p$ is the the number of levels
in the SFPCA used to construct the statistic. At each level, the asymptotic
distribution is chi-square with two degrees of freedom, and by utilizing
the asymptotic independence between the levels, we obtain the desired result.
The proof is presented in the appendix. It is quite complex because
independence properties hold only at the population level. At the sample level,
independence is only asymptotic.

To understand Theorem~\ref{t:main}, we first consider analogs of the statistics
$\widehat \cS^{(m)}_\bn$ and $\widehat \cK^{(m)}_\bn$ defined in
\refeq{SK} in terms of the population scores, $Y_{m, \bs}$,  rather than
the estimated scores $\widehat{Y}_{m, \bs}$. We thus set
\begin{equation} \label{e:SK-pop}
\cS^{(m)}_\bn := \sqrt{N} \, \hat m_3^{Z_m}, \ \ \
\cK^{(m)}_\bn  := \sqrt{N}  \left( \hat m_4^{Z_m} - 3
(\hat m_2^{Z_m})^2 \right), \ \ \ Z_{m, \bs} = Y_{m, \bs} - m_1^{Y_m}.
\end{equation}
Set
\[
\ga_{m, \bh} = \cov(Y_{m, \bs+ \bh}, Y_{m, \bs }).
\]
By Lemma \ref{lemma:convergence1},
\begin{equation} \label{e:skn}
\begin{pmatrix}
\cS_\bn^{(m)} \\
\cK_\bn^{(m)}
\end{pmatrix}
 \convd
N_2\left( \begin{pmatrix}0\\0\end{pmatrix},
\begin{pmatrix}
6\, \sigma_{\cS, m}^2 & 0 \\
0 & 24\, \sigma_{\cK, m}^2
\end{pmatrix} \right),
\end{equation}
where
\begin{equation}\label{e:skv}
\sigma_{\cS, m}^2 = \sum_{\bh\in\mbZ^d} \gamma_{m,\bh}^3,
\quad \text{ and } \quad
\sigma_{\cK, m}^2 = \sum_{\bs\in\mbZ^d} \gamma_{m, \bh}^4.
\end{equation}
Theorem \ref{t:main} states that under technical assumptions, the null
distribution is $\chi^2_{2p}$, as long as the asymptotic variances
$\sigma_{\cS, m}^2$ and   $\sigma_{\cK, m}^2$ can be consistently estimated.

\begin{theorem} \label{t:main}
Suppose Assumptions~\ref{ass:setup}, \ref{ass:Fhat}, \ref{ass:alphas}, \ref{a:S3}
and \ref{a:Rn}
are satisfied and $(X_\bs)$ is a Gaussian process.
Suppose  that $L = L(N) \to \infty$ such that $L^d \, G(N) \convP 0$,
and $L^\prime = L^\prime(N) \to \infty$,
and for $1 \le m \le p$,
$\hat \sigma_{\cS, m}^2$ and $\hat \sigma_{\cK, m}^2$ in \refeq{teststatistics}
are consistent estimators of $\sigma_{\cS, m}^2$ and
$\sigma_{\cK, m}^2$ in  \refeq{skv}. Then $\widehat T_p \convd \chi^2_{2p}$.
\end{theorem}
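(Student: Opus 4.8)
The plan is to reduce the theorem to three ingredients and then assemble them: (i) a joint central limit theorem for the statistics built from the \emph{population} scores, (ii) the asymptotic negligibility of replacing population scores by their estimates, and (iii) the assumed consistency of the variance estimators. Once these are in hand, the conclusion follows from Slutsky's lemma and the continuous mapping theorem.

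First I would upgrade the marginal convergence \refeq{skn} to a joint statement for the full $2p$-vector $(\cS^{(1)}_\bn,\cK^{(1)}_\bn,\ldots,\cS^{(p)}_\bn,\cK^{(p)}_\bn)$. Within a fixed level $m$, Lemma~\ref{lemma:convergence1} already supplies the bivariate convergence \refeq{skn} with diagonal limiting covariance; the absence of an asymptotic covariance between the cubic and quartic statistics reflects that all odd-order joint moments of the mean-zero Gaussian field $(Y_{m,\bs})$ vanish, so that $\cov(Z_{m,\bs}^3, Z_{m,\bs'}^4) = 0$ for every lag. Across levels, the key input is \refeq{zero-cor}: since $(X_\bs)$ is Gaussian, each $Y_{m,\bs}$ in \refeq{scores} is a mean-square limit of linear functionals of the field, hence the whole collection $\{(Y_{m,\bs})_\bs : 1 \le m \le p\}$ is jointly Gaussian, and zero correlation at all lags for $m \ne m'$ forces the score fields of different levels to be mutually independent. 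The per-level statistics are therefore exactly independent across $m$, so joint convergence follows at once from the marginal convergence, with limit $\cN_{2p}(0,D)$ where $D$ is diagonal with entries $6\sigma_{\cS,m}^2$ and $24\sigma_{\cK,m}^2$ from \refeq{skv}.

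Next I would show that passing from the population statistics \refeq{SK-pop} to the estimated statistics \refeq{SK} is asymptotically negligible, i.e.\ $\widehat\cS^{(m)}_\bn - \cS^{(m)}_\bn \convP 0$ and $\widehat\cK^{(m)}_\bn - \cK^{(m)}_\bn \convP 0$. The error in $\widehat Y_{m,\bs} - Y_{m,\bs}$ has two sources. The first is the perturbation of the eigenvectors caused by using $\widehat{\cF}^X_\btheta$ in place of $\cF^X_\btheta$; by eigenvector perturbation bounds of Davis--Kahan type, controlled below through the spectral gaps of Assumption~\ref{ass:alphas}, the filter error $\hat\phi_{m,\bl}-\phi_{m,\bl}$ is of order $G(N)/\beta_m$, and summing over the $O(L^d)$ retained lags in \refeq{Yhat} yields a contribution of order $L^d G(N)$, which vanishes in probability by hypothesis (with $G(N)\convP 0$ guaranteed by Assumption~\ref{ass:Fhat}). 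The second is truncation of the filter at $\norm{\bl}_\infty \le L$, whose contribution is governed by the tail quantity $H_m(L)$ of \refeq{GH} and tends to zero by the summability in Assumption~\ref{a:S3}.

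The hard part will be propagating this score-level control through the $\sqrt N$-scaled third and fourth sample moments, which is precisely where the population-level independence degrades to merely asymptotic independence. Because $\widehat\cS^{(m)}_\bn$ carries a factor $\sqrt N$, I need the \emph{averaged} cubing and quartic errors, not merely the pointwise error, to be $o_P(N^{-1/2})$; writing the differences of cubes and of centered quartics as telescoping products and bounding them by Cauchy--Schwarz together with stationarity over $R_\bn$ (Assumption~\ref{a:Rn}), I would reduce everything to moment bounds on $\widehat Y_{m,\bs} - Y_{m,\bs}$ and on $Y_{m,\bs}$, the latter being finite under Assumption~\ref{ass:setup}. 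Finally, combining the joint convergence of the first step with this negligibility via Slutsky and the assumed consistency of $\hat\sigma_{\cS,m}^2$ and $\hat\sigma_{\cK,m}^2$, the continuous mapping theorem gives $\widehat J_m \convd \chi^2_2$, asymptotically independent across $m$; summing over $m$ in \refeq{teststatistics} yields $\widehat T_p \convd \chi^2_{2p}$.
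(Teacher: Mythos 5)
Your high-level architecture coincides with the paper's proof: joint Gaussianity of the score fields plus the cross-level orthogonality \refeq{zero-cor} gives exact independence of $(\cS_\bn^{(m)},\cK_\bn^{(m)})$ across $m$, Lemma~\ref{lemma:convergence1} gives the per-level limit, and the conclusion follows by Slutsky once the estimated statistics are shown to be close to the population ones and the variance estimators are consistent. The gap lies in the step you yourself flag as ``the hard part'', and as you propose to execute it, it would fail. Reducing $\widehat \cS_\bn^{(m)}-\cS_\bn^{(m)}$ and $\widehat \cK_\bn^{(m)}-\cK_\bn^{(m)}$ by telescoping and Cauchy--Schwarz to per-site moment bounds on $\widehat Y_{m,\bs}-Y_{m,\bs}$ produces bounds of order $\sqrt N\,\bigl(L^d\,G(N)+H_m(L)\bigr)$: the factor $\sqrt N$ survives because Cauchy--Schwarz over $R_\bn$ cannot exploit any cancellation across sites. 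The theorem assumes only $L^d\,G(N)\convP 0$ and $L\to\infty$ (hence $H_m(L)\to 0$), with no rate tied to $N$, so these bounds do not converge; your route would require unstated conditions such as $\sqrt N\, L^d\, G(N)\convP 0$ and $\sqrt N\, H_m(L)\to 0$.

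The paper's Lemma~\ref{lemma:gaussianestimation} recovers the missing $\sqrt N$ through two cancellation mechanisms, and you need at least one of them. For the truncation error ($Y_{m,\bs}$ versus the truncated scores $\widetilde Y_{m,\bs}$), the relevant sums are \emph{centered}, and the last statement of Arcones' theorem (Theorem~\ref{thm:arcones}) bounds the variance of the $\sqrt N$-normalized centered sum by $c\,\var\bigl(Y_\bzero^k-\widetilde Y_\bzero^k\bigr)$ with $c$ independent of $N$; hence this error is $o_P(1)$ as soon as $L\to\infty$, with no rate needed. For the estimation error ($\widehat Y_{m,\bs}$ versus $\widetilde Y_{m,\bs}$), the filter errors $\hat\phi_{m,\bl}-\phi_{m,\bl}$ do not depend on $\bs$, so they are pulled out of the sum over $R_\bn$; what remains are normalized field sums such as $N^{-1/2}\sum_{\bs\in R_\bn}X_{\bs-\bl}$ and higher-order tensor analogues, which are $\cO_P(1)$ by covariance summability (Assumption~\ref{ass:setup}) and Isserlis' theorem (Theorem~\ref{thm:isserlis}); only after this isolation does the error become $\cO_P(L^d\,G(N))=o_P(1)$. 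Without restructuring your negligibility step along one of these lines, the proof does not go through under the stated hypotheses.
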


The remaining question that must be addressed is if the estimators
considered in Section~\ref{s:derivation} are consistent. This is indeed the
case, as stated in the following proposition.

\begin{proposition} \label{p:con}
Suppose Assumptions~\ref{ass:setup}, \ref{ass:Fhat}, \ref{ass:alphas}
and \ref{a:Rn}
hold and  $(X_\bs)$ is a Gaussian process.
 Assume that, as  $N\to \infty$,
\[
L, \, L^\prime \to \infty, \ \ \ \text{ such that} \ \ \
L^\prime = o(\min_{1 \le i \le d} n_i),
\ \ \  (L^{\prime})^{d/3} H_m(L) \convP 0.
\]
Then,  $\hat \sigma_{\cS,m} ^2$ and $\hat \sigma_{\cK, m} ^2$
defined by \refeq{sigmahat}
are  consistent for the asymptotic variances \refeq{skv}.
\end{proposition}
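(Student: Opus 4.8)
The plan is to prove the statement for $\hat\sigma_{\cS,m}^2$; the argument for $\hat\sigma_{\cK,m}^2$ is the same after replacing third by fourth powers. First I would record that the filtered field $(Y_{m,\bs})$ is a two-sided linear filter of $(X_\bs)$ with absolutely summable coefficients, so by Assumptions~\ref{ass:setup} and \ref{a:S3} its autocovariances satisfy $\sum_{\bl}|\gamma_{m,\bl}|<\infty$, hence also $\sum_{\bl}|\gamma_{m,\bl}|^3<\infty$. This makes $\sigma_{\cS,m}^2$ in \eqref{e:skv} finite and disposes of the \emph{truncation bias}: $\sum_{\norm{\bl}_\infty>L'}\gamma_{m,\bl}^3\to0$ as $L'\to\infty$. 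It then suffices to show $\sum_{\norm{\bl}_\infty\le L'}\hat\gamma_{m,\bl}^3\convP\sum_{\norm{\bl}_\infty\le L'}\gamma_{m,\bl}^3$, which I would route through an oracle sample autocovariance $\tilde\gamma_{m,\bl}$ computed from the true scores $Y_{m,\bs}$ in place of $\widehat Y_{m,\bs}$, writing $\hat\gamma_{m,\bl}^3-\gamma_{m,\bl}^3=(\hat\gamma_{m,\bl}^3-\tilde\gamma_{m,\bl}^3)+(\tilde\gamma_{m,\bl}^3-\gamma_{m,\bl}^3)$. Throughout, replacing population by sample means in the centering contributes only lower-order terms, which I would treat as negligible.

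For the \emph{oracle sampling error} $\sum_{\norm{\bl}_\infty\le L'}(\tilde\gamma_{m,\bl}^3-\gamma_{m,\bl}^3)$ I would set $f_\bl=\tilde\gamma_{m,\bl}-\gamma_{m,\bl}$ and use $a^3-b^3=(a-b)(a^2+ab+b^2)$ expanded around $\gamma_{m,\bl}$, giving $3\sum f_\bl\gamma_{m,\bl}^2+3\sum f_\bl^2\gamma_{m,\bl}+\sum f_\bl^3$. Since $(X_\bs)$ is Gaussian the scores are jointly Gaussian, so their sample autocovariances obey $\E f_\bl^2=O(1/N)$ uniformly over the relevant lags; here Assumption~\ref{a:Rn} together with $L'=o(\min_i n_i)$ guarantees $|M_{\bl,\bn}|/N\to1$, so edge effects and the normalization by $N$ are asymptotically harmless. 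The first two sums are then $O_P(N^{-1/2})$ and $O_P(N^{-1})$ by summability of $\gamma_{m,\bl}^2$ and $|\gamma_{m,\bl}|$, while the Gaussian moment bound $\E|f_\bl|^3\lesssim(\E f_\bl^2)^{3/2}$ makes the cube term $O((L')^d N^{-3/2})$; this is $o(1)$ because $L'=o(\min_i n_i)$ forces $(L')^d=o(N)$.

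The main obstacle is the \emph{score-estimation error} $\sum_{\norm{\bl}_\infty\le L'}(\hat\gamma_{m,\bl}^3-\tilde\gamma_{m,\bl}^3)$, because $\widehat Y_{m,\bs}$ uses truncated, estimated filters and the resulting perturbation is coupled across layers and lags rather than clean at the population level. Setting $\delta_{m,\bs}=\widehat Y_{m,\bs}-Y_{m,\bs}$, I would bound its mean square by two sources: the filter truncation, which enters at the order of $H_m(L)$ through a Parseval/transfer-function estimate using boundedness of the spectral density, and the filter estimation $\hat\phi_{m,\bl}-\phi_{m,\bl}$, which is tied to $G(N)$ via eigenvector perturbation under the spectral-gap Assumption~\ref{ass:alphas} (both $G(N)\convP 0$ by Assumption~\ref{ass:Fhat} and $H_m(L)\to0$). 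Writing $e_\bl=\hat\gamma_{m,\bl}-\tilde\gamma_{m,\bl}=\frac1N\sum_\bs(\delta_{m,\bs+\bl}Y_{m,\bs}+Y_{m,\bs+\bl}\delta_{m,\bs}+\delta_{m,\bs+\bl}\delta_{m,\bs})$, a Cauchy--Schwarz bound against $\frac1N\sum_\bs Y_{m,\bs}^2=O_P(1)$ yields the uniform estimate $\sup_{\norm{\bl}_\infty\le L'}|e_\bl|=O_P(H_m(L))$, with the estimation contribution absorbed via $G(N)$ and $\beta_m$.

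Finally I would expand $\hat\gamma_{m,\bl}^3-\tilde\gamma_{m,\bl}^3=3e_\bl\tilde\gamma_{m,\bl}^2+3e_\bl^2\tilde\gamma_{m,\bl}+e_\bl^3$. The first two sums are controlled by $\sup_\bl|e_\bl|$ times the convergent quantities $\sum_\bl\tilde\gamma_{m,\bl}^2$ and $\sum_\bl|\tilde\gamma_{m,\bl}|$, hence are $o_P(1)$; the genuinely dangerous piece is $\sum_{\norm{\bl}_\infty\le L'}e_\bl^3$, which is bounded by $(2L'+1)^d\sup_\bl|e_\bl|^3$, that is, of order $\big((L')^{d/3}H_m(L)\big)^3$. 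This is exactly the term for which the hypothesis $(L')^{d/3}H_m(L)\convP 0$ is calibrated, and it is the crux of the proof. Collecting the three contributions gives $\hat\sigma_{\cS,m}^2\convP\sigma_{\cS,m}^2$, and the identical scheme based on $a^4-b^4=(a-b)(a^3+a^2b+ab^2+b^3)$, together with $\sum_\bl|\gamma_{m,\bl}|^4<\infty$, delivers $\hat\sigma_{\cK,m}^2\convP\sigma_{\cK,m}^2$.
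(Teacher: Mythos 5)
Your overall architecture (truncation bias of the variance estimator, oracle sampling error for the true scores, score-estimation error) parallels the paper's proof, and your treatment of the first two pieces is sound --- indeed more explicit than the paper's on the Gaussian moment bounds for the oracle sample autocovariances. The gap is in the third piece, which you yourself identify as the crux. Your uniform estimate $\sup_{\norm{\bl}_\infty\le L'}|e_\bl| = O_P(H_m(L))$ is not justified: the error $\delta_{m,\bs}=\widehat Y_{m,\bs}-Y_{m,\bs}$ contains, besides the truncation error (which is indeed $O_P(H_m(L))$ in mean square, by \eqref{e:Ytildeconvergence2}), the filter-estimation error $\sum_{\norm{\bl}_\infty\le L}\ip{X_{\bs-\bl}}{\hat\phi_{m,\bl}-\phi_{m,\bl}}$. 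There are of order $L^d$ estimated coefficients, each off by $O_P(G(N))$ uniformly (see \eqref{e:h1}), so this part contributes $O_P\bigl((L^d G(N))^2\bigr)$ to $N^{-1}\sum_\bs\delta_{m,\bs}^2$, hence $O_P(L^d G(N))$ to $\sup_\bl|e_\bl|$ after Cauchy--Schwarz. Nothing makes $L^d G(N)=O(H_m(L))$: these are unrelated quantities ($G(N)$ measures spectral-density estimation error, $H_m(L)$ the filter tail), and the phrase ``absorbed via $G(N)$ and $\beta_m$'' is precisely the step that cannot be waved through. Note also that the hypotheses of the proposition say nothing about $G(N)$; the paper's own proof explicitly imports the condition $L^d\, G(N)\convP 0$ from Theorem~\ref{t:main}.

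Even if you add that condition, your counting bound $\sum_{\norm{\bl}_\infty\le L'}e_\bl^3\le (2L'+1)^d\,\sup_\bl|e_\bl|^3$ would force the stronger, nowhere-assumed requirement $(L')^{d/3}L^d\, G(N)\convP 0$; the hypothesis $(L')^{d/3}H_m(L)\convP 0$ calibrates only the truncation part. The paper escapes this trap with a finer argument: it splits the error into the filter-estimation part (with $\widehat D_\bs = \widehat Y_\bs - \widetilde Y_\bs$) and the truncation part ($\tilde c_\bl$ versus $c_\bl$ and $\gamma_\bl$), isolates the factors $\hat\phi_{m,\bk}-\phi_{m,\bk}$ so that the estimation part is bounded by $\bigl(L^{d}\sup_{\bk}\norm{\hat\phi_{m,\bk}-\phi_{m,\bk}}\bigr)^{6}$ times the average $N^{-3}L^{-2d}\sum_{\bl,\bi,\bj}\bigl\|\sum_\bs X_{\bs+\bl-\bi}\otimes X_{\bs-\bj}\bigr\|^{3}$, and then shows via Isserlis and summability of $\tr(C_\bh)$ that this average has expectation $O\bigl((L')^d N^{-3/2}+\min(L'/L,1)^d\bigr)=O(1)$. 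In other words, the sum over the $(L')^d$ lags is absorbed by the covariance summability rather than paid for by a factor $(L')^d$, so only $L^d\, G(N)\convP 0$ is needed there, while $(L')^{d/3}H_m(L)\convP 0$ is reserved for the truncation term $\sum_{\norm{\bl}_\infty\le L'}|c_\bl-\tilde c_\bl|^3$. To repair your proof you would need to perform this split and apply the moment argument to the estimation part (or else assume the stronger rate condition); as written, the step fails.
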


Proofs of both Theorem~\ref{t:main} and Proposition~\ref{p:con}
are developed in the appendix.

\section{Finite sample performance}\label{s:sim}
The purpose of this section is to assess the performance of our test
by means of a simulation study.  To evaluate empirical size and power,
we simulated samples
\[
X_{s,t}(u), \quad s, t \in \{ 1, 2, \dots, n \}, \; u \in [0,1],
\]
according to the following autoregressive scheme:
\[
X_{s,t} = A\, X_{s-1,t} + B\, X_{s,t-1} + \varepsilon_{s,t},
\]
where the $\varepsilon_{s,t}$ are  i.i.d. errors  and
$A$ and $B$ are two operators. We explain the details below, but the
idea is that the $\varepsilon_{s,t}$ are Gaussian curves under the null hypothesis
and have  different distributions under alternatives.

We simulated samples in the finite dimensional space
spanned by 15 Fourier basis functions $(v_i)_{1\leq i \leq 15}$.
The $X_\bs$ are determined by their coefficient vectors.
The operators are represented by $15 \times 15$ dimensional coefficient
matrices whose $(i,j)$--entries are simulated as independent normal
random variables with mean $0$ and variance $(i^2+j^2)^{-1/2}$.
The operators are then scaled to the operator norm
$\| A \|_\cL  \hspace{-4pt} = 0.6$ and $\|B\|_\cL \hspace{-4pt} = 0.35$,
which ensures convergence and stationarity
\cite{kuenzer:hormann:kokoszka:2020}.

Under the null, we simulate $\varepsilon_{s,t}$ such that the Fourier coefficients
$\epsilon_{s,t,i} = \ip{ \varepsilon_{s,t} }{v_i}$ are independent, normal
with mean zero and variance $2^{-i}$.
For these data, three SFPCs explain 85\% of the variance.
Under the alternative,
we simulate the $\epsilon_{s,t,i}$ from Johnson's $S_U$ distribution
\citetext{johnson:1949},
which is a leptokurtic distribution family
with four parameters that allows to fix the first four moments.
It is defined as a transformation of the normal distribution by
\[
X = \xi + \lambda \sinh\left( \frac{Z - \gamma}{\delta} \right),
\quad \text{ with } \quad
Z \sim N(0,1).
\]
For this distribution, all moments exist.
We choose the parameters such that mean and variance of $\epsilon_{s,t,i}$
are the same as under the null setting, and specify
skewness $\tau$ and kurtosis $\kappa$ separately.
We denote such a distribution as $S_U(\tau, \kappa)$.

We simulated these data generating processes for several sample sizes.
Empirical rejection rates can be found in Table~\ref{tab:newsims}.
The critical values are
the quantiles of the $\chi_{2p}^2$-distribution.

\begin{table}[t]
\centering
\begin{footnotesize}
\begin{tabular}{lrrrrrrrrrrrrrrr}
$N$ & & \multicolumn{4}{c}{$12 \times 12$} & & \multicolumn{4}{c}{$25 \times 25$} & & \multicolumn{4}{c}{$50 \times 50$} \\
\cline{3-6}\cline{8-11}\cline{13-16}
\multicolumn{2}{l}{$p$} &    1 &    2 &    3 &    4 &  &    1 &    2 &    3 &    4 &  &    1 &    2 &    3 &    4 \\
  \hline
Gaussian             & & 0.05 & 0.06 & 0.06 & 0.07 & & 0.05 & 0.05 & 0.06 & 0.06 & & 0.06 & 0.06 & 0.06 & 0.06 \\
\hline
$S_U(0,\, 3.2)$      & & 0.08 & 0.10 & 0.11 & 0.12 & & 0.10 & 0.12 & 0.15 & 0.17 & & 0.15 & 0.19 & 0.26 & 0.35 \\
$S_U(0.1,\, 3.2)$ \hspace{-1cm}   & & 0.08 & 0.09 & 0.10 & 0.11 & & 0.12 & 0.15 & 0.18 & 0.22 & & 0.24 & 0.30 & 0.45 & 0.58 \\
$S_U(0,\, 3.5)$      & & 0.12 & 0.14 & 0.17 & 0.18 & & 0.20 & 0.27 & 0.35 & 0.44 & & 0.45 & 0.59 & 0.78 & 0.91 \\
$S_U(0.25,\, 3.5)$ \hspace{-1cm}  & & 0.14 & 0.17 & 0.20 & 0.22 & & 0.32 & 0.42 & 0.56 & 0.66 & & 0.70 & 0.85 & 0.96 & 0.99 \\
$S_U(0,\, 4)$        & & 0.19 & 0.25 & 0.29 & 0.32 & & 0.42 & 0.55 & 0.69 & 0.81 & & 0.78 & 0.90 & 0.98 & 1.00 \\
$S_U(0.5,\, 4)$      & & 0.26 & 0.32 & 0.39 & 0.45 & & 0.63 & 0.80 & 0.91 & 0.97 & & 0.94 & 0.99 & 1.00 & 1.00 \\
$S_U(0,\, 5)$        & & 0.30 & 0.39 & 0.48 & 0.55 & & 0.66 & 0.83 & 0.94 & 0.98 & & 0.97 & 0.99 & 1.00 & 1.00 \\
$S_U(1,\, 5)$        & & 0.51 & 0.62 & 0.72 & 0.79 & & 0.91 & 0.99 & 1.00 & 1.00 & & 0.99 & 1.00 & 1.00 & 1.00 \\
$S_U(0,\, 6)$        & & 0.41 & 0.52 & 0.60 & 0.67 & & 0.81 & 0.92 & 0.98 & 1.00 & & 0.99 & 1.00 & 1.00 & 1.00 \\
$S_U(1,\, 6)$        & & 0.55 & 0.65 & 0.76 & 0.81 & & 0.93 & 0.98 & 1.00 & 1.00 & & 1.00 & 1.00 & 1.00 & 1.00 \\
  \hline
\end{tabular}
\end{footnotesize}
\noindent\caption{Empirical rejection rates with nominal size $\alpha = 0.05$
based on $3000$ replications.}\label{tab:newsims}
\end{table}

The nominal size is attained in the simulation settings we considered.
The empirical size is stable with respect to $p$,
while the empirical power increases with $p$.  (The number of SFPCs we would use according to our goal
of explaining 85\% of the variability in the data is $p=3$.)
This can be explained by the simulation setting we have chosen,
where all principal components are similarly non-Gaussian.
For other settings, it may well be possible that
the deviation from Gaussianity only manifests
in certain principal components.
In such cases, choosing a large $p$ might drown out this signal
and, in fact, lower the power of the test.

Our results suggest that
the power of the test is similar to the functional time series case
considered in  \citetext{gorecki:hhk:2018} and the
scalar spatial case studied  in \citetext{horvath:kokoszka:wang:2020}.

Our conclusion is that the test is able to reliably detect
moderate deviations from normality even
when the sample size is small ($12 \times 12$).
Starting from medium sample sizes ($25 \times 25$),
most practically relevant departures from normality
can be detected. The case closest to Gaussianity that we considered
is a kurtosis of $\kappa = 3.2$, which is the kurtosis of a
Student-$t$-distribution with $34$ degrees of freedom.
The $S_U$-distribution with this kurtosis is almost identical to
the corresponding Student-$t$-distribution
except for the tails that permit all moments to be finite.
In scalar data, this kind of distribution is visually indistinguishable from
the normal distribution. For many practical applications,
this deviation from normality can even be neglected.

\section{Application to Surface Incoming Shortwave Radiation} \label{s:sis}

\begin{figure}
\begin{center}
\includegraphics[width=0.47\textwidth, trim={0 10 0 4mm}, clip]
{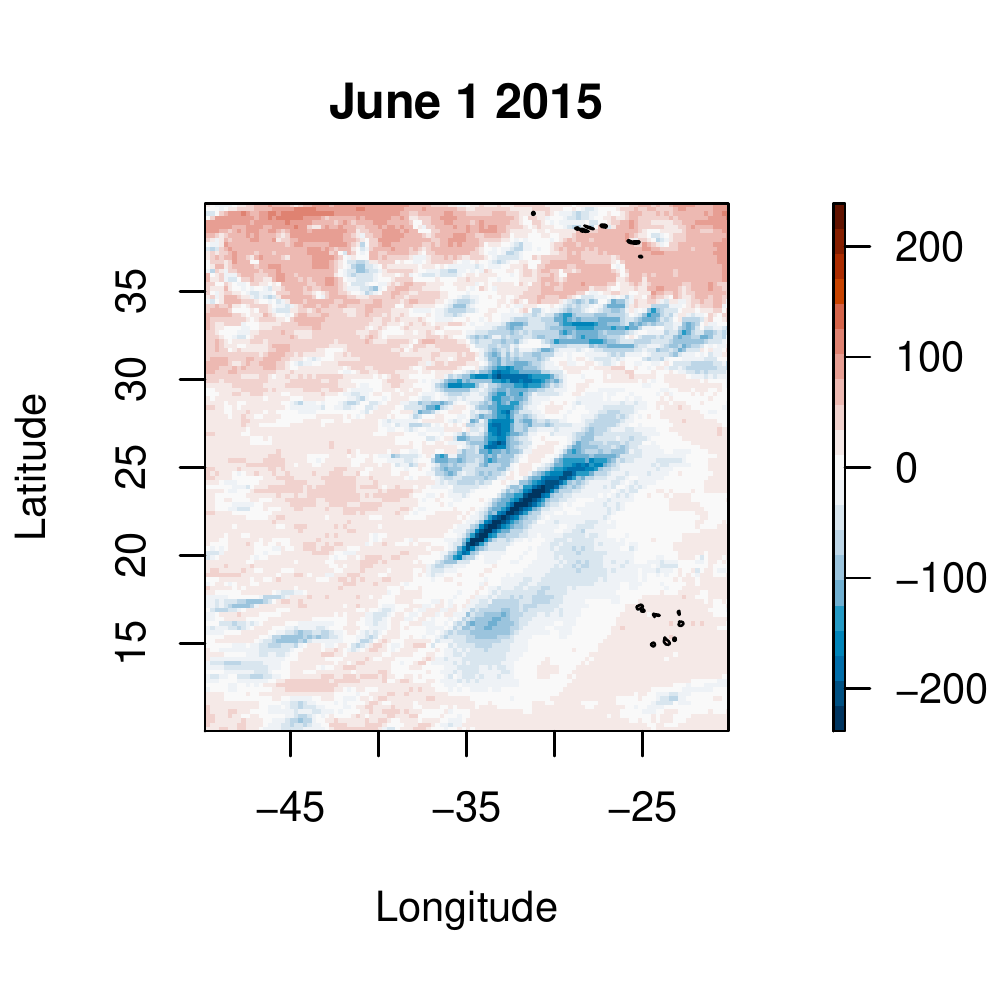}%
\hfill%
\includegraphics[width=0.47\textwidth, trim={0 10 0 4mm}, clip]
{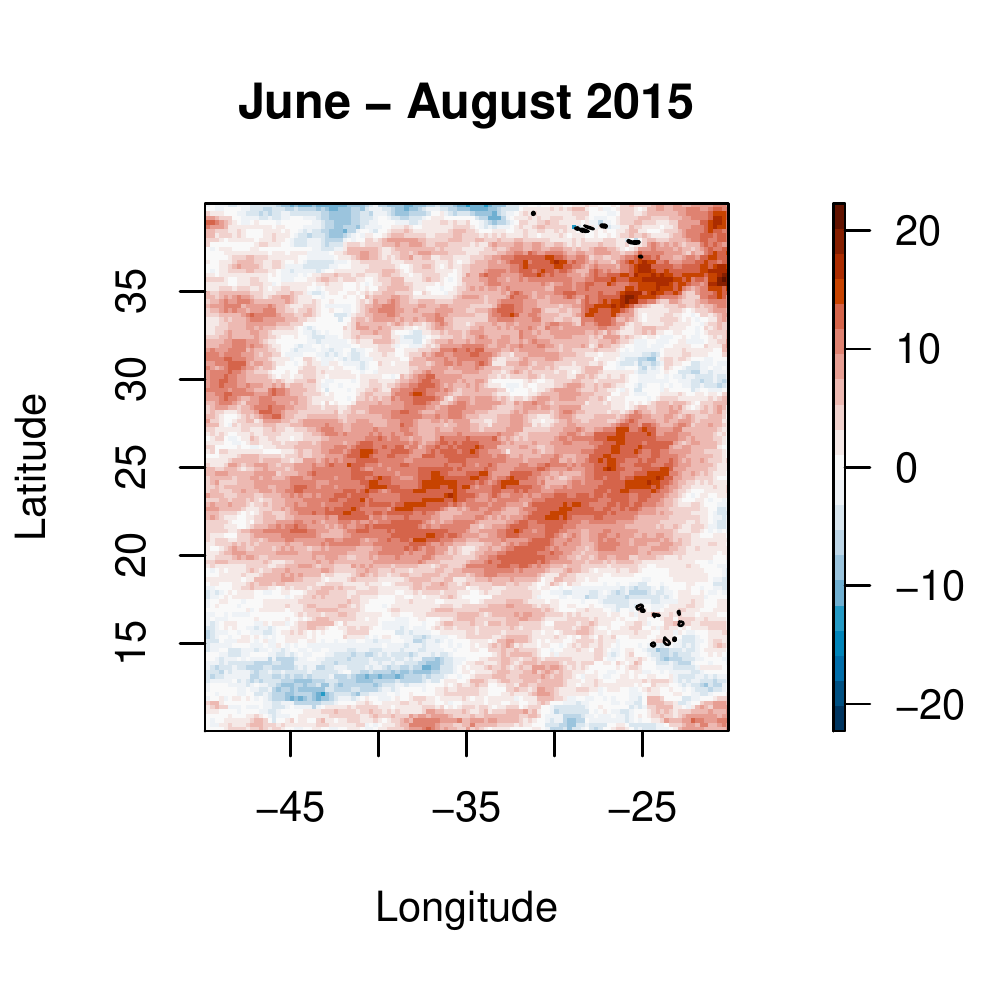}%
\vspace{-8mm}
\end{center}
\caption{The SIS radiation anomalies as viewed over different time scales.
Daily mean (left) vs 3-months mean (right).
Note that the regional mean of the SIS radiation anomaly is positive in 2015.
For orientation, the archipelagos of the Azores and Cape Verde
are added to the map.
\label{fig:sis1}}
\end{figure}

Research by \citetext{perron:sura:2013} on many aspects of the atmosphere suggests
that when taking daily mean values, many variables do not follow a Gaussian
distribution.
\citetext{horvath:kokoszka:wang:2020} analyzed monthly mean sea surface temperature data
from various regions of the world and came to the conclusion that one needs to be
cautious when assuming Gaussianity of spatial data.
We want to explore what the application of  our test to an important data set reveals.
We study  Surface Incoming Shortwave (SIS) radiation data,
also called solar surface irradiance,
provided by EUMETSAT \cite{eumetsat:2009}.
These data are available for free download at \texttt{https://www.cmsaf.eu/}.
The SIS data measure how much solar radiation reaches a certain point of the earth's surface on average on a given day. It is measured in W/m$^2$.
The SIS radiation has as a natural upper limit the solar radiation
that reaches the top of the atmosphere,
which varies naturally due to the angle of the sun at a certain latitude
on a given day of the year.
The amount of radiation that reaches the surface is then reduced
by the absorption in the atmosphere, mainly due to clouds.
For our purposes, we extracted a region in the Northern Atlantic ocean
(10$^\circ$--40$^\circ$N, 50$^\circ$--20$^\circ$E)
at a spatial resolution of 0.25$^\circ$.
This yields 120 $\times$ 120 spatial measurement points with a temporal time span of 33 years (1983 to 2015) of daily observations.
The region we chose has a homogeneous surface with little topographical features.
Also, it doesn't comprise high latitudes which would induce distortion
due to the projection onto the angular grid. Hence,
we can assume that after suitably demeaning the data,
the resulting field is stationary.

We take the daily measurements of the SIS radiation data for the summer months June, July and August (92 days),
and fit smooth curves for each year and point on the grid.
We use a 30-dimensional B-spline base, notably smoothing out
the rather volatile data.
The resulting curves can be interpreted as a moving average over time
that indicates how much solar radiation reaches the surface
around a given point in time.
Because we have 33 years of data, we center the data by subtracting the long-term mean from each point on the grid and day of the year.
The centered data can be referred to as anomalies.
Snapshots of the resulting data can be seen
in Figures~\ref{fig:sis1} and \ref{fig:sis2}.

\begin{figure}
\begin{center}
\includegraphics[width=0.8\textwidth, trim={0 10 0 14mm}, clip]
{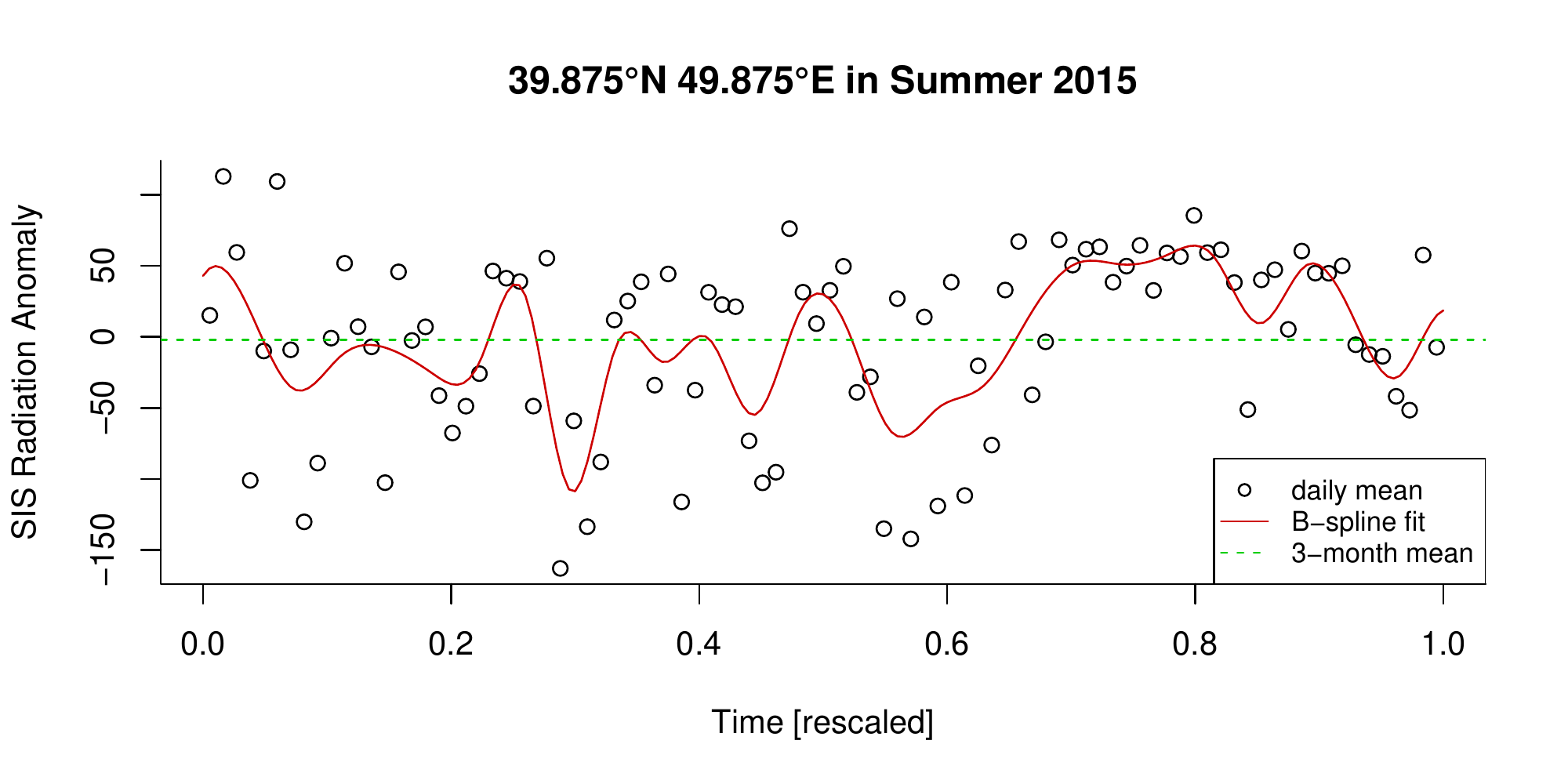}%
\vspace{-8mm}
\end{center}
\caption{The SIS radiation anomalies at 39.875$^\circ$N 49.875$^\circ$E in summer 2015.
The time (June--August) is rescaled to $[0,1]$.
\label{fig:sis2}}
\end{figure}

From the nature of the daily mean data
and from the left plot in Figure~\ref{fig:sis1},
one can already guess that it will probably exhibit notable skewness,
which excludes Gaussianity.
But if we average over a longer time period, say the summer season,
it is conceivable that
the mean values approach a Gaussian distribution.
While the data may still deviate from normality,
it will be close enough to justify the application of algorithms that require Gaussianity.
We will illustrate this process of smoothing out the non-Gaussianity
by employing the scalar test by \citetext{horvath:kokoszka:wang:2020} to verify that
the marginal distributions (data on a day-by-day level) clearly do not follow a Gaussian distribution,
while the mean values from a summer season do.
These are the two extrema of smoothing.
The smoothed B-spline curves will naturally lie somewhere in between.
In a strict sense, the curves will not be Gaussian,
but our test helps assessing whether the assumption of Gaussianity
is reasonable or not.

In the following, we use  $L^\prime = 20$
to capture enough of the covariance of the SFPC scores.
We set $q_i = 15$ and follow the other steps of the algorithm
at the end of Section~\ref{s:derivation}.
For comparison, we use the test of \citetext{horvath:kokoszka:wang:2020}
with the power estimator of the covariances.
In order to guarantee comparability,
we use the same truncation parameter as for our functional test.

\begin{table}
\centering
\begin{small}
\begin{tabular}{rrr|rrr|rrr}
  \hline
 year & B-spline & 3-m mean &  year & B-spline & 3-m mean &  year & B-spline & 3-m mean \\
  \hline
 1983 & \textit{0.0132} &         0.1190    & 1994 &         0.7971  &         0.8681    & 2005 &         0.4951  & \textit{0.0437} \\
 1984 & \textit{0.0455} &         0.1783    & 1995 & \textit{0.0000} &         0.3721    & 2006 & \textit{0.0010} & \textit{0.0100} \\
 1985 & \textit{0.0000} &         0.4274    & 1996 & \textit{0.0498} &         0.3472    & 2007 & \textit{0.0010} &         0.8964  \\
 1986 &         0.5611  &         0.0989    & 1997 & \textit{0.0002} &         0.1038    & 2008 &         0.1465  &         0.5577  \\
 1987 & \textit{0.0003} &         0.0828    & 1998 & \textit{0.0193} &         0.7030    & 2009 & \textit{0.0005} &         0.2197  \\
 1988 & \textit{0.0001} & \textit{0.0039}   & 1999 & \textit{0.0000} & \textit{0.0281}   & 2010 & \textit{0.0000} &         0.4231  \\
 1989 & \textit{0.0000} &         0.9229    & 2000 & \textit{0.0000} &         0.6011    & 2011 &         0.6868  &         0.1196  \\
 1990 & \textit{0.0000} &         0.1030    & 2001 & \textit{0.0083} &         0.9889    & 2012 & \textit{0.0425} &         0.1164  \\
 1991 & \textit{0.0000} & \textit{0.0002}   & 2002 & \textit{0.0000} &         0.3769    & 2013 & \textit{0.0018} &         0.1620  \\
 1992 & \textit{0.0322} &         0.1509    & 2003 &         0.8110  &         0.2778    & 2014 & \textit{0.0000} &         0.2479  \\
 1993 & \textit{0.0000} & \textit{0.0000}   & 2004 & \textit{0.0000} &         0.2282    & 2015 &         0.0971  &         0.4810  \\
  \hline
\end{tabular}
\end{small}
\noindent\caption{$P$-values of the  normality test
applied to the SIS radiation anomaly curves on a $120 \times 120$ grid.
The number $p$ of SFPCs is different
in each year sample, depending on the fraction  of variance explained
by the SFPCs.
Rejections of the null hypothesis on a level $\alpha=0.05$ are marked in italics.
\label{tab:sispvals}}
\end{table}

Results are presented in Table~\ref{tab:sispvals}.
We do not report the results of the univariate tests on the single days,
as aggregation of the marginal $p$-values is a delicate matter
and it suffices to say that most $p$-values were in the range of $<10^{-4}$,
indicating a strong departure from Gaussianity.
We note that for the samples of the 3-months mean SIS radiation
and a significance level of $\alpha = 0.05$,
the test by \citetext{horvath:kokoszka:wang:2020} rejects the null hypothesis of a Gaussian distribution in merely 6 of the 33 years,
marked by the $p$-values in italics.
This means that for most years,
the 3-months mean of the summer months can actually be assumed to be Gaussian,
which is a consequence of a general CLT principle.
The number  of rejections rises to 26 if we use the fitted B-spline curves.
The most general conclusion is that the fitted curve data
cannot be assumed to be  Gaussian for most years.
This means that by smoothing out the daily fluctuations,
we do not approach the Gaussian distribution as fast as expected.
However, this is subject to changes
depending on the amount of smoothing that is applied on the data.
After sufficient smoothing of the daily data,
the application of tools of spatio-temporal
statistics which are based on the assumption of Gaussianity might be justified.

Depending on the year, our test uses  11--13 SFPCs to cover 85\% of the variance.
For some of the years in which we don't obtain a rejection, further
inspection shows that strong evidence of non-Gaussianity appears
only in the 15th or 16th SFPC.
Naturally, this is not captured by our test.
This problem would not occur  in multivariate data,
where the same procedure can be applied with the number of
principal components set to the dimension of the multivariate data.


\bigskip

\bibliographystyle{oxford3}
\renewcommand{\baselinestretch}{0.9}

\bibliography{tho}

\begin{thebibliography}{}

\bibitem[\protect\citeauthoryear{Arcones}{1994}]{arcones:1994}
Arcones, M.~A.
\newblock \shortcite{arcones:1994}.
\newblock Limit theorems for nonlinear functionals of a stationary {G}aussian
  sequence of vectors.
\newblock {\em The Annals of Probability}, {\bf 22}{\bf ,} 2242--2274.

\bibitem[\protect\citeauthoryear{Aue \bgroup \em et al.\egroup
  }{2015}]{aue:norinho:hormann:2015}
Aue, A., Norinho, D.~D. and H{\"o}rmann, S.
\newblock \shortcite{aue:norinho:hormann:2015}.
\newblock On the prediction of stationary functional time series.
\newblock {\em Journal of the American Statistical Association}, {\bf 110}{\bf
  ,} 378--392.

\bibitem[\protect\citeauthoryear{Bai and Ng}{2005}]{bai:ng:2005}
Bai, J. and Ng, S.
\newblock \shortcite{bai:ng:2005}.
\newblock Tests for skewness, kurtosis, and normality for time series data.
\newblock {\em Journal of Business \& Economic Statistics}, {\bf 23}{\bf ,}
  number 1, 49--60.

\bibitem[\protect\citeauthoryear{Bosq}{2000}]{bosq:2000}
Bosq, D.
\newblock \shortcite{bosq:2000}.
\newblock {\em Linear {P}rocesses in {F}unction {S}paces}.
\newblock Springer.

\bibitem[\protect\citeauthoryear{Constantinou \bgroup \em et al.\egroup
  }{2017}]{constantinou:kr:2017}
Constantinou, P., Kokoszka, P. and Reimherr, M.
\newblock \shortcite{constantinou:kr:2017}.
\newblock Testing separability of space-time functional processes.
\newblock {\em Biometrika}, {\bf 104}{\bf ,} 425--437.

\bibitem[\protect\citeauthoryear{Delicado \bgroup \em et al.\egroup
  }{2010}]{delicado:2010}
Delicado, P., Giraldo, R., Comas, C. and Mateu, J.
\newblock \shortcite{delicado:2010}.
\newblock Statistics for spatial functional data: some recent contributions.
\newblock {\em Environmetrics}, {\bf 21}{\bf ,} 224--239.

\bibitem[\protect\citeauthoryear{Doornik and
  Hansen}{2008}]{doornik:hansen:2008}
Doornik, J.~A. and Hansen, H.
\newblock \shortcite{doornik:hansen:2008}.
\newblock An omnibus test for univariate and multivariate normality.
\newblock {\em Oxford Bulletin of Economics and Statistics}, {\bf 70}{\bf ,}
  927--939.

\bibitem[\protect\citeauthoryear{Gelfand and
  Schliep}{2016}]{gelfand:schliep:2016}
Gelfand, A. and Schliep, E.
\newblock \shortcite{gelfand:schliep:2016}.
\newblock Spatial statistics and {G}aussian processes: A beautiful marriage.
\newblock {\em Spatial statistics}, {\bf 18}{\bf ,} 86--104.

\bibitem[\protect\citeauthoryear{G{\'o}recki \bgroup \em et al.\egroup
  }{2018}]{gorecki:hhk:2018}
G{\'o}recki, T., H{\"o}rmann, S., Horv{\'a}th, L. and Kokoszka, P.
\newblock \shortcite{gorecki:hhk:2018}.
\newblock Testing normality of functional time series.
\newblock {\em Journal of Time Series Analysis}, {\bf 39}{\bf ,} 471--487.

\bibitem[\protect\citeauthoryear{G{\'o}recki \bgroup \em et al.\egroup
  }{2020}]{gorecki:hk:2020}
G{\'o}recki, T., Horv{\'a}th, L. and Kokoszka, P.
\newblock \shortcite{gorecki:hk:2020}.
\newblock Tests of normality of functional data.
\newblock {\em International Statistical Review}, {\bf 88}{\bf ,} 677--697.

\bibitem[\protect\citeauthoryear{Gromenko \bgroup \em et al.\egroup
  }{2017}]{gromenko:kokoszka:2017miss}
Gromenko, O., Kokoszka, P. and Sojka, J.
\newblock \shortcite{gromenko:kokoszka:2017miss}.
\newblock Evaluation of the cooling trend in the ionosphere using functional
  regression with incomplete curves.
\newblock {\em The Annals of Applied Statistics}, {\bf 11}{\bf ,} 898--918.

\bibitem[\protect\citeauthoryear{H{\"o}rmann and
  Kokoszka}{2010}]{hormann:kokoszka:2010}
H{\"o}rmann, S. and Kokoszka, P.
\newblock \shortcite{hormann:kokoszka:2010}.
\newblock Weakly dependent functional data.
\newblock {\em The Annals of Statistics}, {\bf 38}{\bf ,} 1845--1884.

\bibitem[\protect\citeauthoryear{Horv{\'a}th and Kokoszka}{2012}]{HKbook}
Horv{\'a}th, L. and Kokoszka, P.
\newblock \shortcite{HKbook}.
\newblock {\em Inference for {F}unctional {D}ata with {A}pplications}.
\newblock Springer.

\bibitem[\protect\citeauthoryear{Horv{\'a}th \bgroup \em et al.\egroup
  }{2014}]{horvath:kokoszka:rice:2014}
Horv{\'a}th, L., Kokoszka, P. and Rice, G.
\newblock \shortcite{horvath:kokoszka:rice:2014}.
\newblock Testing stationarity of functional time series.
\newblock {\em Journal of Econometrics}, {\bf 179}{\bf ,} 66--82.

\bibitem[\protect\citeauthoryear{Horv{\'a}th \bgroup \em et al.\egroup
  }{2020}]{horvath:kokoszka:wang:2020}
Horv{\'a}th, L., Kokoszka, P. and Wang, S.
\newblock \shortcite{horvath:kokoszka:wang:2020}.
\newblock Testing normality of data on a multivariate grid.
\newblock {\em Journal of Multivariate Analysis}, {\bf 179}{\bf ,};
  https://doi.org/10.1016/j.jmva.2020.104640.

\bibitem[\protect\citeauthoryear{Hsing and Eubank}{2015}]{hsing:eubank:2015}
Hsing, T. and Eubank, R.
\newblock \shortcite{hsing:eubank:2015}.
\newblock {\em Theoretical {F}oundations of {F}unctional {D}ata {A}nalysis,
  with an {I}ntroduction to {L}inear {O}perators}.
\newblock Wiley.

\bibitem[\protect\citeauthoryear{Hyndman and Shang}{2009}]{hyndman:shang:2009}
Hyndman, R.~J. and Shang, H.~L.
\newblock \shortcite{hyndman:shang:2009}.
\newblock Forecasting functional time series.
\newblock {\em Journal of the Korean Statistical Society}, {\bf 38}{\bf ,}
  199--211.

\bibitem[\protect\citeauthoryear{Jarque and Bera}{1980}]{jarque:bera:1980}
Jarque, C.~M. and Bera, A.~K.
\newblock \shortcite{jarque:bera:1980}.
\newblock Efficient tests for normality, homoscedasticity and serial
  independence of regression residuals.
\newblock {\em Economics letters}, {\bf 6}{\bf ,} 255--259.

\bibitem[\protect\citeauthoryear{Jarque and Bera}{1987}]{jarque:bera:1987}
Jarque, C.~M. and Bera, A.~K.
\newblock \shortcite{jarque:bera:1987}.
\newblock A test for normality of observations and regression residuals.
\newblock {\em International Statistical Review}, {\bf 55}{\bf ,} 163--172.

\bibitem[\protect\citeauthoryear{Johnson}{1949}]{johnson:1949}
Johnson, N.~L.
\newblock \shortcite{johnson:1949}.
\newblock Systems of frequency curves generated by methods of translation.
\newblock {\em Biometrika}, {\bf 36}{\bf ,} 149--176.

\bibitem[\protect\citeauthoryear{Kokoszka and Reimherr}{2017}]{KRbook}
Kokoszka, P. and Reimherr, M.
\newblock \shortcite{KRbook}.
\newblock {\em Introduction to {F}unctional {D}ata {A}nalysis}.
\newblock CRC Press.

\bibitem[\protect\citeauthoryear{Kuenzer \bgroup \em et al.\egroup
  }{2020}]{kuenzer:hormann:kokoszka:2020}
Kuenzer, T., H{\"o}rmann, S. and Kokoszka, P.
\newblock \shortcite{kuenzer:hormann:kokoszka:2020}.
\newblock Principal component analysis of spatially indexed functions.
\newblock {\em Journal of the American Statistical Association};
  https://doi.org/10.1080/01621459.2020.1732395.

\bibitem[\protect\citeauthoryear{Laha and Roghatgi}{1979}]{laha:roghatgi:1979}
Laha, R.~G. and Roghatgi, V.~K.
\newblock \shortcite{laha:roghatgi:1979}.
\newblock {\em Probability {T}heory}.
\newblock Wiley.

\bibitem[\protect\citeauthoryear{Liebl}{2013}]{liebl:2013}
Liebl, D.
\newblock \shortcite{liebl:2013}.
\newblock Modeling and forecasting electricity prices: A functional data
  perspective.
\newblock {\em The Annals of Applied Statistics}, {\bf 7}{\bf ,} 1562--1592.

\bibitem[\protect\citeauthoryear{Liu \bgroup \em et al.\egroup
  }{2017}]{liu:ray:hooker:2017}
Liu, C., Ray, S. and Hooker, G.
\newblock \shortcite{liu:ray:hooker:2017}.
\newblock Functional principal components analysis of spatially correlated
  data.
\newblock {\em Statistics and Computing}, {\bf 27}{\bf ,} 1639--1654.

\bibitem[\protect\citeauthoryear{Lobato and
  Velasco}{2004}]{lobato:velasco:2004}
Lobato, I. and Velasco, C.
\newblock \shortcite{lobato:velasco:2004}.
\newblock A simple test of normality for time series.
\newblock {\em Econometric Theory}, {\bf 20}{\bf ,} 671--689.

\bibitem[\protect\citeauthoryear{Martinez-Hern{\'a}dez and
  Genton}{2020}]{martinez:2020}
Martinez-Hern{\'a}dez, I. and Genton, M.~G.
\newblock \shortcite{martinez:2020}.
\newblock Recent developments in complex and spatially correlated functional
  data.
\newblock {\em Braz. J. Probab. Stat.}, {\bf 34}{\bf ,} 204--229.

\bibitem[\protect\citeauthoryear{Mateu and Giraldo}{2020}]{mateu:giraldo:2020}
Mateu, J. and Giraldo, R.
\newblock \shortcite{mateu:giraldo:2020} (eds).
\newblock {\em Geostatistical {F}unctional {D}ata {A}nalysis: {T}heory and
  {M}ethods}.
\newblock Wiley.
\newblock Forthcoming.

\bibitem[\protect\citeauthoryear{Perron and Sura}{2013}]{perron:sura:2013}
Perron, M. and Sura, P.
\newblock \shortcite{perron:sura:2013}.
\newblock Climatology of non-{G}aussian atmospheric statistics.
\newblock {\em Journal of Climate}, {\bf 26}{\bf ,} 1063--1083.

\bibitem[\protect\citeauthoryear{Ramsay and
  Silverman}{2005}]{ramsay:silverman:2005}
Ramsay, J.~O. and Silverman, B.~W.
\newblock \shortcite{ramsay:silverman:2005}.
\newblock {\em Functional {D}ata {A}nalysis}.
\newblock Springer.

\bibitem[\protect\citeauthoryear{Schulz \bgroup \em et al.\egroup
  }{2009}]{eumetsat:2009}
Schulz, J., Albert, P., Behr, H.-D., Caprion, D., Deneke, H., Dewitte, S.,
  D\"urr, B., Fuchs, P., Gratzki, A., Hechler, P., Hollmann, R., Johnston, S.,
  Karlsson, K.-G., Manninen, T., M\"uller, R., Reuter, M., Riihel\"a, A.,
  Roebeling, R., Selbach, N., Tetzlaff, A., Thomas, W., Werscheck, M., Wolters,
  E. and Zelenka, A.
\newblock \shortcite{eumetsat:2009}.
\newblock Operational climate monitoring from space: the {EUMETSAT} satellite
  application facility on climate monitoring ({CM-SAF}).
\newblock {\em Atmospheric Chemistry and Physics}, {\bf 9}{\bf ,} 1687--1709.

\bibitem[\protect\citeauthoryear{Shenton and
  Bowman}{1977}]{shenton:bowman:1977}
Shenton, L.~R. and Bowman, K.~O.
\newblock \shortcite{shenton:bowman:1977}.
\newblock A bivariate model for the distribution of $\sqrt{b_1}$ and $b_2$.
\newblock {\em Journal of the American Statistical Association}, {\bf 72}{\bf
  ,} 206--211.

\bibitem[\protect\citeauthoryear{Shi and Choi}{2011}]{shi:choi:2011}
Shi, J.~Q. and Choi, T.
\newblock \shortcite{shi:choi:2011}.
\newblock {\em Gaussian {P}rocess {R}egression {A}nalysis for {F}unctional
  {D}ata}.
\newblock CRC Press.

\bibitem[\protect\citeauthoryear{Zhang}{2016}]{zhang:2016}
Zhang, X.
\newblock \shortcite{zhang:2016}.
\newblock White noise testing and model diagnostic checking for functional time
  series.
\newblock {\em Journal of Econometrics}, {\bf 194}{\bf ,} 76--95.

\end{thebibliography}

\bigskip

\noindent{\bf Acknowledgements} \
Research partially supported by NSF grants  DMS--1914882 and DMS--1923142.

\bigskip \bigskip

\centerline{\Large \bf Appendix}

\appendix

\section{Proofs of Theorem \ref{t:main} and Proposition \ref{p:con} }\label{s:p}
We begin with two known  results, which play important role in our arguments.
Theorem \ref{thm:isserlis} is well--known.  Theorem \ref{thm:arcones} was essentially
established by \citetext{arcones:1994}. Although the title of the paper suggests otherwise,
the results are also true for Gaussian fields,  not only for sequences.
Theorem \ref{thm:arcones} is thus the spatial version of Theorem~2
in \citetext{arcones:1994}.

\begin{theorem}[Isserlis]\label{thm:isserlis}
If $(W_1, \dots, W_n)$ is a multivariate normal vector with zero mean,
then
$\E\left[ \prod W_i \right] = 0$ if $n$ is odd
and
$\E\left[ \prod W_i \right] = \sum \prod \E[W_i W_j]$ if $n$ is even,
where the sum is over all partitions of $\{1, \dots, n\}$
into pairs $\{i, j\}$.
\end{theorem}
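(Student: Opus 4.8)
The plan is to prove Isserlis's formula via the moment generating function, which produces the sum-over-pairings structure directly. Write $W = (W_1,\ldots,W_n)^\top \sim \cN_n(\bzero, \Sigma)$ with $\Sigma = (\sigma_{jk})$ and $\sigma_{jk} = \E[W_j W_k]$, and recall the MGF
\[
M(\bt) = \E\big[ e^{\bt^\top W} \big] = \exp\Big( \tfrac12\, \bt^\top \Sigma \bt \Big), \qquad \bt \in \mbR^n.
\]
Since a Gaussian has finite moments of all orders, $M$ is analytic near the origin and the mixed moments are recovered by differentiation,
\[
\E\Big[ \prod_{i=1}^n W_i \Big] = \frac{\partial^n M}{\partial t_1 \cdots \partial t_n}(\bzero).
\]
Expanding the exponential gives $M(\bt) = \sum_{m \ge 0} (m!\,2^m)^{-1} (\bt^\top \Sigma \bt)^m$, a series every term of which is homogeneous of \emph{even} degree $2m$.

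Applying $\partial_{t_1} \cdots \partial_{t_n}$ and evaluating at $\bzero$ extracts precisely the coefficient of the square-free monomial $t_1 t_2 \cdots t_n$: a monomial missing some variable is annihilated by the corresponding derivative, and a monomial in which a variable occurs with power $\ge 2$ still contains that variable after $n$ first-order derivatives and hence vanishes at $\bzero$. This settles the odd case at once, since for odd $n$ there is no degree-$n$ term in the series, so the derivative is $0$.

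For even $n = 2m$ I would compute the coefficient of $t_1 \cdots t_n$ in $(m!\,2^m)^{-1}(\bt^\top \Sigma \bt)^m$. Writing $\bt^\top \Sigma \bt = \sum_{j,k} \sigma_{jk}\, t_j t_k$, the $m$-th power is a sum over ordered choices $(j_1,k_1),\ldots,(j_m,k_m)$ of $\prod_{\ell=1}^m \sigma_{j_\ell k_\ell}\, t_{j_\ell} t_{k_\ell}$. Such a term contributes to $t_1 \cdots t_n$ exactly when its $2m$ indices are a permutation of $1,\ldots,n$, i.e.\ when the unordered pairs $\{j_\ell,k_\ell\}$ form a partition of $\{1,\ldots,n\}$ into pairs. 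A fixed such pairing arises from $m!$ orderings of its pairs among the factors and $2^m$ orientations within the pairs (using $\sigma_{jk} = \sigma_{kj}$), so it is counted $m!\,2^m$ times with weight $\prod \sigma$; the prefactor $(m!\,2^m)^{-1}$ cancels this multiplicity exactly, leaving $\E[\prod_i W_i] = \sum \prod \E[W_i W_j]$ over all pairings, as claimed.

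The one delicate point is this combinatorial bookkeeping, namely matching the $m!\,2^m$ overcount of each pairing against the series prefactor; everything else is routine. As a shorter alternative I could use Gaussian integration by parts, $\E[W_1 g(W)] = \sum_{j} \sigma_{1j}\, \E[\partial_j g(W)]$, applied to $g(W) = \prod_{i \ge 2} W_i$, giving the recursion $\E[\prod_i W_i] = \sum_{j \ge 2} \sigma_{1j}\, \E[\prod_{i \ge 2,\, i \ne j} W_i]$; unrolling this by induction on $n$ reproduces the sum over pairings and renders the odd-case vanishing transparent as well.
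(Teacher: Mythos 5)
Your proof is correct, but there is nothing in the paper to compare it against: the paper states this result as well known (it is the classical Isserlis/Wick theorem) and gives no proof, using it only as an ingredient in the appendix arguments. Your MGF derivation is the standard complete proof. The delicate point you flag—matching the $m!\,2^m$ multiplicity of each pairing (the $m!$ orderings of the pairs among the factors times the $2^m$ orientations within pairs, using $\sigma_{jk}=\sigma_{kj}$) against the series prefactor $(m!\,2^m)^{-1}$—is handled correctly, and the parity argument for odd $n$ (the exponential series contains only even-degree terms) is clean. One tiny imprecision: for a degree-$n$ monomial in which some variable occurs to power $\ge 2$, the cleanest reason it contributes nothing is that some other variable must then be absent, so the corresponding first-order derivative annihilates the monomial outright; your phrasing (the variable ``still contains that variable after $n$ first-order derivatives'') reaches the same conclusion by a different order of differentiation, so the argument stands either way. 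The Gaussian integration-by-parts recursion $\E[W_1 g(W)] = \sum_j \sigma_{1j}\,\E[\partial_j g(W)]$ that you sketch at the end is an equally valid, and arguably shorter, alternative; unrolled by induction it produces the pairing sum directly and makes the odd-$n$ vanishing immediate.
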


\begin{theorem}[Arcones]\label{thm:arcones}
Let $(X_\bs)_{\bs \in \mbZ^d}$ be a stationary Gaussian random field of
mean-zero $\mbR^r$-valued vectors denoted by
$X_{\bs} = (X^{(1)}_{\bs}, \dots, X^{(r)}_{\bs})^\top$,
and $f: \mbR^d \to \mbR$ a function with $\E[f^2(X_1)] < \infty$.

Define
\[
r^{(p,q)} (\bk) = \E\left[ X^{(p)}_{\bs} X^{(q)}_{\bs+\bk} \right]
\]
for $k \in \mbZ^d$ and $1 \leq p, q \leq r$.
Suppose that
\begin{align*}
\lim_{N\to\infty} \frac{1}{N}
\sum_{\bk \in R_\bn} \sum_{\bl \in R_\bn}
r^{(p,q)} (\bk -\bl)
& \qquad \text{ and } &
\lim_{N\to\infty} \frac{1}{N}
\sum_{\bk \in R_\bn} \sum_{\bl \in R_\bn}
\left( r^{(p,q)} (\bk -\bl) \right)^2
&
\end{align*}
exist for $1 \leq p, q \leq r$.
Then
\begin{equation}
\frac{1}{\sqrt N} \sum_{\bs \in R_\bn}
\left( f(X_\bs) - \E[f(X_\bs)] \right)
\convd
N(0, \sigma^2),
\end{equation}
where
\begin{equation}
\sigma^2 := \sum_{\bh \in \mbZ^d}
\cov\left( f(X_\bh), f(X_\bzero) \right) .
\end{equation}
Moreover, there exists a constant $c$
depending only on the field of covariances
such that
\begin{equation}
\E \left(
\frac{1}{\sqrt N} \sum_{\bs \in R_\bn}
\left( f(X_\bs) - \E[f(X_\bs)] \right)
\right)^2 \leq c \;
\var ( f(X_\bzero) )
\end{equation}
for each $\bn$ and function $f$ with finite second moment.
\end{theorem}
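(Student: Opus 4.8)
The plan is to prove both the central limit theorem and the variance bound through the Wiener chaos (Hermite) expansion of $f$ relative to the Gaussian field, reducing the problem to finite-degree polynomials and then invoking the method of cumulants. First I would work inside the Gaussian Hilbert space spanned by the components $X_\bs^{(p)}$ and decompose the centred variable $F_\bs := f(X_\bs) - \E[f(X_\bs)]$ into its orthogonal chaos components, $F_\bs = \sum_{q \ge 1} F_\bs^{(q)}$, where $F_\bs^{(q)}$ lies in the $q$-th chaos and, by stationarity, is itself a stationary field in $\bs$. The series starts at $q=1$ because $F_\bs$ is centred. Writing $S_N := N^{-1/2}\sum_{\bs\in R_\bn} F_\bs = \sum_{q\ge 1} S_N^{(q)}$, the whole argument rests on analysing each $S_N^{(q)}$ separately and then recombining.

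The variance analysis is where the two hypotheses enter. By Theorem~\ref{thm:isserlis} (equivalently the diagram formula), components of different chaos orders are uncorrelated, and for equal orders $\cov(F_\bzero^{(q)}, F_\bh^{(q)})$ is a finite linear combination of products $\prod_{i=1}^q r^{(p_i,q_i)}(\bh)$. Hence $\var(S_N) = \sum_{q\ge 1} \var(S_N^{(q)})$ with $\var(S_N^{(q)}) = N^{-1}\sum_{\bk,\bl\in R_\bn}\cov(F_\bk^{(q)}, F_\bl^{(q)})$, a Ces\`aro average of products of $q$ covariances. For $q=1$ this average converges by the first hypothesis. For $q\ge 2$ I would first note that the second hypothesis forces $\sum_\bh (r^{(p,q)}(\bh))^2 < \infty$ (the Ces\`aro weights $\prod_i(1-|h_i|/n_i)_+$ increase to $1$, so finiteness of the limit yields square-summability), and then use the elementary bound $\sum_\bh |r(\bh)|^q \le (\sup_\bh|r(\bh)|)^{q-2}\sum_\bh r(\bh)^2 < \infty$, valid since $|r^{(p,q)}(\bh)| \le r^{(p,p)}(\bzero)^{1/2} r^{(q,q)}(\bzero)^{1/2}$. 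Dominated convergence against these summable majorants then yields $\var(S_N^{(q)}) \to \sum_\bh \cov(F_\bh^{(q)},F_\bzero^{(q)})$, and summing over $q$ (dominated by $\sum_q \var(F_\bzero^{(q)}) = \var(F_\bzero) < \infty$) gives $\var(S_N) \to \sigma^2 = \sum_\bh \cov(F_\bh, F_\bzero)$. The same estimates, kept in absolute value and using that higher powers of $r$ decay faster, produce a constant $c$ depending only on the covariance field with $\var(S_N) \le c\,\var(f(X_\bzero))$ uniformly in $\bn$, which is the \emph{moreover} assertion.

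For the limit law I would proceed by truncation and the method of cumulants. Fixing $Q$, the truncated sum $S_N^{\le Q} = \sum_{q=1}^Q S_N^{(q)}$ is a normalised field-sum of a single fixed-degree polynomial in $X_\bs$. I would show that, for every $j\ge 3$, the $j$-th cumulant of $S_N^{\le Q}$ tends to $0$ while the variance converges as above; by the diagram formula each such cumulant is $N^{-j/2}$ times a sum, over connected diagrams on $j$ vertices, of products of covariances $r^{(\cdot,\cdot)}(\bs_a - \bs_b)$ indexed over $R_\bn^j$. Connectedness forces at least $j-1$ genuine lattice differences, so repeated application of Young's inequality for convolutions on $\mbZ^d$ together with $r \in \ell^2(\mbZ^d)$ bounds each diagram by $o(N^{j/2})$, whence the cumulant vanishes and $S_N^{\le Q}\convd N(0,\sigma_Q^2)$ with $\sigma_Q^2 \uparrow \sigma^2$. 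Finally I would let $Q\to\infty$: by the uniform variance bound of the previous step, $\limsup_N \var(S_N - S_N^{\le Q}) \le c\sum_{q>Q}\var(F_\bzero^{(q)}) \to 0$, so a standard approximation argument for close-in-$L^2$ approximations transfers the Gaussian limit from $S_N^{\le Q}$ to $S_N$.

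The main obstacle is the cumulant estimate in the $\mbZ^d$ setting, i.e.\ showing that connected diagrams contribute negligibly after the $N^{-j/2}$ normalisation. The combinatorics of the diagrams is dimension-free, but the analytic bound requires that at least one free lattice sum behaves as a genuine average over the expanding rectangle $R_\bn$; this is precisely where Assumption~\ref{a:Rn} (all side lengths diverging) is used, and where Arcones' one-dimensional convolution estimates must be carried out coordinatewise on $\mbZ^d$. Making the truncation-plus-tail passage fully rigorous, uniformly in $N$, is the second delicate point, and it is handled by the uniform variance bound established above.
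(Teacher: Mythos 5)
First, a framing remark: the paper does not prove this statement at all — it is quoted as a known result of Arcones (1994), together with the remark that it extends from sequences to random fields. So your proposal has to be judged on its own merits. Your overall architecture (Wiener chaos decomposition, variance convergence from the two Ces\`aro hypotheses, method of cumulants for a fixed-degree truncation, then an $L^2$ tail bound to pass from $S_N^{\le Q}$ to $S_N$) is precisely the classical Breuer--Major/Arcones route, and several supporting steps are correct: Fatou does give $\sum_\bh (r^{(p,q)}(\bh))^2<\infty$ from the second hypothesis; chaos components of different orders are uncorrelated across locations by joint Gaussianity; and the truncation-plus-tail passage is fine once a uniform variance bound is in hand.

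The genuine gap is the central analytic claim: ``connectedness forces at least $j-1$ genuine lattice differences, so repeated application of Young's inequality together with $r\in\ell^2$ bounds each diagram by $o(N^{j/2})$.'' This is not correct as stated, and it is exactly where the difficulty of the theorem sits. Two distinct problems arise. (a) Even for diagrams in which every vertex has degree $\ge 2$ (chaos orders $\ge 2$), Young/Cauchy--Schwarz under $\ell^2$ alone yields only $O(N^{j/2})$, not $o(N^{j/2})$; obtaining the $o$ requires the additional splitting of $r$ into a compactly supported part and an $\ell^2$-small remainder (the core of the Breuer--Major argument), which your sketch never performs. (b) More seriously, the theorem permits Hermite rank $1$, so the truncated sum contains first-chaos components, i.e.\ degree-one vertices in the diagrams. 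Edges incident to such vertices generate free sums $\sum_{\bs\in R_\bn} r(\bs-\cdot)$ that are not controlled by $\ell^2$ at all, but only through the first (Ces\`aro) hypothesis. Concretely,
\begin{equation*}
\mathrm{cum}\bigl(S_N^{(1)},S_N^{(1)},S_N^{(2)}\bigr) \;\propto\;
N^{-3/2}\sum_{\bs_3\in R_\bn}\Bigl(\sum_{\bs_1\in R_\bn} r(\bs_1-\bs_3)\Bigr)^{2}\;\ge\;0,
\end{equation*}
which is not identically zero, and every bound available from your toolkit (sup bounds, Cauchy--Schwarz, Young) gives $O(1)$ rather than $o(1)$; proving that it vanishes requires a finer Dirichlet-kernel/spectral argument that uses hypothesis~1 in an essential way. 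The same issue infects your ``moreover'' bound at $q=1$: keeping absolute values is impossible there, since $\sum_\bh|r(\bh)|$ may diverge; instead one must use that hypothesis~1 makes the matrices $N^{-1}\sum_{\bk,\bl\in R_\bn}R(\bk-\bl)$ a convergent, hence bounded, sequence of quadratic forms. A clean repair of the CLT step is available: prove the marginal CLT for each chaos separately ($q=1$ is exactly Gaussian; each $q\ge 2$ falls under Breuer--Major because $\sum_\bh \sup_{p,q'}|r^{(p,q')}(\bh)|^{q}<\infty$ follows from square-summability), and then invoke a multivariate fourth-moment theorem of Peccati--Tudor type, which upgrades componentwise Gaussian limits of fixed-chaos vectors with converging covariances to a joint Gaussian limit — thereby avoiding mixed cumulants, which your argument as written cannot handle.
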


\bigskip

We define the truncated version of the population scores in \refeq{scores} by
\[
\widetilde{Y}_{m,\bs} = \sum_{\| \bl \|\le L} \ip{X_{\bs-\bl}}{\phi_{m,\bl}}, \ \ \
\bs \in \mbZ^d.
\]
If $(X_\bs)_{\bs \in \mbZ^d}$ is a Gaussian process,
then $(Y_{m,\bs})_{\bs \in \mbZ^d}$ and
$(\widetilde{Y}_{m,\bs})_{\bs \in \mbZ^d}$ are also Gaussian processes.
This is not the case for the $\widehat{Y}_{m,\bs}$,
as the filters $\hat{\phi}_{m,\bl}$ are estimated,
so the $\widehat{Y}_{m,\bs}$ are nonlinear functions of the  observations $X_\bs$.

In the following, we assume that the value of  $p$ has been selected.
To simplify the formulas,
from now on, we will assume that the data are centered, i.e. $\E X_\bs = 0$.
This has no impact on the arguments.  To further lighten the notation,
we omit the index $m$ in the subscripts
of the objects related to the level $m$ SFPCs and
simply write $Y_{\bs}$ for $Y_{m,\bs}$, $\phi_{\bl}$ for $\phi_{m,\bl}$, and so on.
We will  use
\begin{align*}
\cS_\bn & := \sqrt{N} \, \hat m_3^{Z}, &
\cK_\bn & := \sqrt{N}  \left( \hat m_4^{Z} - 3 (\hat m_2^{Z})^2 \right), &
\text{ with } Z_\bs & := Y_\bs - \hat m^{Y}_1, \\
\widetilde \cS_\bn & := \sqrt{N} \, \hat m_3^{\widetilde Z}, &
\widetilde \cK_\bn & := \sqrt{N}  \left( \hat m_4^{\widetilde Z} - 3 (\hat m_2^{\widetilde Z})^2 \right), &
\text{ with } \widetilde Z_\bs & := \widetilde Y_\bs - \hat m^{\widetilde Y}_1.
\end{align*}

Our first lemma establishes  asymptotic distributions of
population quantities, which approximate  the
the corresponding statistics at a  level $m$.

\begin{lemma} \label{lemma:convergence1}
Suppose $(X_\bs)_{\bs\in\mbZ^d}$ is a Gaussian process satisfying
Assumptions~\ref{ass:setup} and \ref{a:S3}
and $\min\limits_{1\leq i \leq d} n_i \to \infty$,
then the sums
\begin{align}
\frac{1}{\sqrt N} \sum_{\bs \in R_\bn} Y_{\bs}, &&
\frac{1}{\sqrt N} \sum_{\bs \in R_\bn} (Y_{\bs}^2 - \E[Y_\bs^2]), &&
\frac{1}{\sqrt N} \sum_{\bs \in R_\bn} Y_{\bs}^3, &&
\frac{1}{\sqrt N} \sum_{\bs \in R_\bn} (Y_{\bs}^4 - \E[Y_\bs^4])
\label{e:sumsnormal}
\end{align}
are asymptotically normal
and
\begin{equation} \label{e:sknormal}
\begin{pmatrix}
\cS_\bn \\
\cK_\bn
\end{pmatrix}
 \convd
N_2\left( \begin{pmatrix}0\\0\end{pmatrix},
\begin{pmatrix}
6\, \sigma_\cS^2 & 0 \\
0 & 24\, \sigma_\cK^2
\end{pmatrix} \right),
\end{equation}
where
\begin{equation} \label{e:skvars}
\sigma_\cS^2 = \sum_{\bh\in\mbZ^d} \gamma_\bh^3,
\quad \text{ and } \quad
\sigma_\cK^2 = \sum_{\bs\in\mbZ^d} \gamma_\bh^4.
\end{equation}
Furthermore, it holds for $k \leq 4$ that
\begin{equation} \label{e:momentconvergence}
\frac{1}{\sqrt N} \; \sum_{\bs \in R_\bn} ( Y_\bs^k - \widetilde Y_\bs^k - \E[Y_\bs^k - \widetilde Y_\bs^k] ) \convP 0, \qquad \text{ if } L \xrightarrow{N\to\infty} \infty.
\end{equation}
\end{lemma}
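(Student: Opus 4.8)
The plan is to reduce everything to the scalar stationary Gaussian field $(Y_\bs)_{\bs\in\mbZ^d}$ and then combine Arcones' central limit theorem (Theorem~\ref{thm:arcones}) with Isserlis' formula (Theorem~\ref{thm:isserlis}). Since the filters $\phi_\bl$ are deterministic and absolutely summable (Assumption~\ref{a:S3}), each $Y_\bs=\sum_\bl\ip{X_{\bs-\bl}}{\phi_\bl}$ is Gaussian and $(Y_\bs)$ is a scalar stationary Gaussian field. The first step is to verify that its autocovariances $\gamma_\bh=\cov(Y_{\bs+\bh},Y_\bs)=\sum_{\bl,\bl'}\ip{C_{\bh-\bl+\bl'}\phi_\bl}{\phi_{\bl'}}$ are absolutely summable; bounding $\norm{C_\bg}_\cL\le\tr(C_\bg)$ and using Assumptions~\ref{ass:setup} and \ref{a:S3} gives $\sum_\bh|\gamma_\bh|\le\big(\sum_\bg\tr(C_\bg)\big)\big(\sum_\bl\norm{\phi_\bl}\big)^2<\infty$. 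Summability of $\gamma_\bh$, and hence of $\gamma_\bh^2$, ensures that the two limits required in Theorem~\ref{thm:arcones} exist.

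For the asymptotic normality of the four sums in \eqref{e:sumsnormal}, I would apply Theorem~\ref{thm:arcones} to the functions $f(y)=y$, $y^2-\E[Y_\bzero^2]$, $y^3$, and $y^4-\E[Y_\bzero^4]$; each has a finite second moment because $Y_\bzero$ is Gaussian. Joint asymptotic normality of the four-dimensional vector follows from the Cram\'er--Wold device: any linear combination of the four summands is again a polynomial in $Y_\bs$ with finite variance, so Theorem~\ref{thm:arcones} applies to it directly.

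To establish \eqref{e:sknormal} I would first strip off the effect of centering by the sample mean $\bar Y:=\hat m_1^{Y}$. Expanding $\hat m_3^{Z}$ and $\hat m_4^{Z}-3(\hat m_2^{Z})^2$ in powers of $\bar Y$ and using $\bar Y=O_P(N^{-1/2})$, $\hat m_2^{Y}\convP\sigma^2$ and $\hat m_3^{Y}=O_P(N^{-1/2})$, every term except the leading one is $o_P(1)$, so that
\[
\cS_\bn=\tfrac{1}{\sqrt N}\sum_{\bs\in R_\bn}Y_\bs^3-3\sigma^2\,\tfrac{1}{\sqrt N}\sum_{\bs\in R_\bn}Y_\bs+o_P(1),
\]
\[
\cK_\bn=\tfrac{1}{\sqrt N}\sum_{\bs\in R_\bn}(Y_\bs^4-3\sigma^4)-6\sigma^2\,\tfrac{1}{\sqrt N}\sum_{\bs\in R_\bn}(Y_\bs^2-\sigma^2)+o_P(1),
\]
with $\sigma^2=\E[Y_\bzero^2]$. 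Both are fixed linear combinations of the jointly normal sums from the previous step, so $(\cS_\bn,\cK_\bn)$ is asymptotically bivariate normal, and it remains to compute the limiting covariance matrix by Isserlis' formula. The crucial point is that the sample-mean corrections cancel the ``nuisance'' contributions exactly: the $9\sigma^4\sum_\bh\gamma_\bh$ term in the variance of $\tfrac{1}{\sqrt N}\sum Y_\bs^3$ and the $72\sigma^4\sum_\bh\gamma_\bh^2$ term in the variance of $\tfrac{1}{\sqrt N}\sum(Y_\bs^4-3\sigma^4)$ are removed, leaving precisely $6\sum_\bh\gamma_\bh^3$ and $24\sum_\bh\gamma_\bh^4$. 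The off-diagonal entry vanishes because $\cS_\bn$ is built from odd-degree and $\cK_\bn$ from even-degree polynomials of a mean-zero Gaussian, whose cross-moments are zero. This yields exactly the diagonal matrix in \eqref{e:sknormal}. I expect this bookkeeping---carrying out the sixth- and eighth-order Isserlis computations and verifying the cancellations---to be the main obstacle.

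Finally, for \eqref{e:momentconvergence} I would set $D_\bs:=Y_\bs-\widetilde Y_\bs=\sum_{\norm{\bl}>L}\ip{X_{\bs-\bl}}{\phi_\bl}$ and apply the $L^2$-bound in the second part of Theorem~\ref{thm:arcones} to the $\mbR^2$-valued Gaussian field $(Y_\bs,\widetilde Y_\bs)$ with $f(y,\tilde y)=y^k-\tilde y^k$. This bounds the second moment of the normalized sum by $c\,\var(Y_\bzero^k-\widetilde Y_\bzero^k)$, where $c$ can be taken uniform in $L$ since the covariances of $(Y_\bs,\widetilde Y_\bs)$ are dominated by those of $(Y_\bs,Y_\bs)$. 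Writing $Y_\bs^k-\widetilde Y_\bs^k=D_\bs\sum_{j=0}^{k-1}Y_\bs^j\widetilde Y_\bs^{k-1-j}$ and using Cauchy--Schwarz together with $\E[D_\bzero^4]=3(\var D_\bzero)^2$ reduces the claim to $\var(D_\bzero)\to0$, which holds because $\sum_{\norm{\bl}>L}\norm{\phi_\bl}\to0$ by Assumption~\ref{a:S3}. Chebyshev's inequality then gives the convergence in probability; the only additional care needed here is the uniformity of the Arcones constant in $L$.
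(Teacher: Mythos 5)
Your proposal is correct and takes essentially the same route as the paper's own proof: Arcones' CLT (Theorem~\ref{thm:arcones}) applied to polynomial functionals of the Gaussian score field, the Cram\'er--Wold device together with Isserlis cancellations reducing the asymptotic covariance of $(\cS_\bn,\cK_\bn)$ to $\mathrm{diag}(6\sigma_\cS^2,\,24\sigma_\cK^2)$, and the $L^2$-bound in the second part of Theorem~\ref{thm:arcones} combined with $\var(Y_\bzero-\widetilde Y_\bzero)\to 0$ for \eqref{e:momentconvergence}. The only differences are cosmetic (you expand around the sample mean before invoking the CLT, the paper does it after, and you verify the covariance-summability hypotheses of Arcones explicitly, which the paper leaves implicit).
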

\begin{proof}
We prove this Lemma by using Theorem~\ref{thm:arcones}.
Its conditions can be  easily verified,  as they mostly concern the summability
of covariances, which follows from our assumptions.
The asymptotic normality of the sums in \refeq{sumsnormal}
is a simple application of Theorem~\ref{thm:arcones}.
Let us denote $\gamma = \E[Y_\bzero^2]$.
Gaussianity implies
$\E[Y_\bs^3] = 0$ and
$\E[Y_\bs^4] = 3\, \gamma^2$.
As for statement \eqref{e:sknormal},
we first show that
\[
\sqrt N \begin{pmatrix}
\hat m^Y_3 - 3\, \gamma\, \hat m^Y_1 \\
\hat m^Y_4 - 6\, \gamma\, \hat m^Y_2 + 3\, \gamma^2
\end{pmatrix}
 \convd
N_2\left( \begin{pmatrix}0\\0\end{pmatrix},
\begin{pmatrix}
6\, \sigma_\cS^2 & 0 \\
0 & 24\, \sigma_\cK^2
\end{pmatrix} \right)
\]
by employing the Cram\'er--Wold device as follows.
Let $\lambda_1, \lambda_2 \in \mbR$,
and again using Theorem~\ref{thm:arcones},
we have
\[
\frac{1}{\sqrt{N}} \sum_{\bs \in R_\bn}
\left( \lambda_1 \; (Y_{\bs}^3 - 3\, \gamma\, Y_{\bs})
 + \lambda_2 \, ( Y_{\bs}^4 - 6\, \gamma\, Y_{\bs}^2 + 3\, \gamma^2 ) \right)
\convd N(0, \tau^2),
\]
where
\begin{align*}
\tau^2 = \sum_{\bh \in \mbZ^d} \textstyle
\E\big[ &
\left( \lambda_1 \; ( Y_{\bzero}^3 - 3\, \gamma\, Y_{\bzero} )
 + \lambda_2 \, ( Y_{\bzero}^4 - 6\, \gamma\, Y_{\bzero}^2 + 3\, \gamma^2 ) \right)
\\ & \cdot
\left( \lambda_1 \; ( Y_{\bh}^3 - 3\, \gamma\, Y_{\bh} )
 + \lambda_2 \, ( Y_{\bh}^4 - 6\, \gamma\, Y_{\bh}^2 + 3\, \gamma^2 ) \right)
\big].
\end{align*}
Under Gaussianity of $(X_\bs)$, $(Y_\bs)$ is also a Gaussian process and
the cross-covariance between the two summands is evidently zero.
Isserlis's theorem (Theorem~\ref{thm:isserlis})
then yields the simple forms
of $6\, \sigma_\cS^2$ and $24\, \sigma_\cK^2$,
since all the lower order terms cancel.

Now we can see from \refeq{sumsnormal} that
$\hat m_k^Y$, for $k \leq 4$, are $\cO_P(N^{-1/2})$.
Expanding the sums, we can deduce that
\begin{align*}
\cS_\bn & =
\sqrt{N}\, \hat m^Z_3 =
\sqrt{N}\, ( \hat m^Y_3 - 3 \gamma\, \hat m^Y_1)
 + \cO_P(N^{-1/2}), \\
\cK_\bn & =
\sqrt{N} \left( \hat m^Z_4 - 3(\hat m^Z_2)^2 \right) =
\sqrt{N} \left( \hat m^Y_4 - 6\, \gamma\, \hat m^Y_2 + 3\, \gamma^2 \right)
 + \cO_P(N^{-1/2}),
\end{align*}
which amounts to convergence in probability.
This proves \refeq{sknormal}.

We prove \refeq{momentconvergence} by using the last statement of Theorem~\ref{thm:arcones},
which yields
that there exists a constant $c$ such that
\[
\frac{1}{N} \; \E \left( \sum_{\bs \in R_\bn} ( Y_\bs^k - \widetilde Y_\bs^k - \E[Y_\bs^k - \widetilde Y_\bs^k] )  \right)^2
 \leq
c \; \var(Y_\bzero^k - \widetilde Y_\bzero^k)
\]
This constant $c$ does not depend on $N$,
but only on the sequence of covariances
$\cov(Y_\bh^k - \widetilde Y_\bh^k, Y_\bzero^k - \widetilde Y_\bzero^k)$.
What is still left to show is that it is possible to choose $c$
such that it is also not dependent on $k$ or $L$.
For this, we follow the argument of \citetext{gorecki:hhk:2018}.

The convergence rate of
$Y_{\bs} - \widetilde Y_{\bs}$
is
\begin{equation} \label{e:Ytildeconvergence}
\left| \widetilde{Y}_{m,\bs} - Y_{m,\bs} \right| = \cO_P( H_m(L) ),
\end{equation}
and it holds that
\begin{equation} \label{e:Ytildeconvergence2}
\var(Y_\bs - \widetilde Y_\bs) = \cO(H_m(L)^2).
\end{equation}
Since $Y_\bzero$ and $\widetilde Y_\bzero$ are Gaussian,
$ \var(Y_\bzero^k - \widetilde Y_\bzero^k) \xrightarrow{N \to \infty} 0$
follows easily.

\rightline{\QED}
\end{proof}

The next lemma shows that the statistics at each level $m$ are close to
their population counterparts.

\begin{lemma}\label{lemma:gaussianestimation}
Suppose $(X_\bs)_{\bs\in\mbZ^d}$ is a Gaussian process, which satisfies
Assumptions~\ref{ass:setup} and \ref{a:S3}, and Assumption~\ref{a:Rn}
holds. Then, the  following bounds hold as $N\to \infty$.
If $L \to \infty$, then
\begin{equation} \label{e:W1}
\cS_\bn - \widetilde \cS_\bn = o_P(1), \qquad
\cK_\bn - \widetilde \cK_\bn = o_P(1).
\end{equation}
If, in addition, Assumptions~\ref{ass:Fhat} and  \ref{ass:alphas} are satisfied,
then we can choose $L = L(N) \to \infty$ such that $L^d \, G(N) \convP 0$ and it holds that
\begin{equation} \label{e:W2}
\widehat \cS_\bn - \widetilde \cS_\bn = o_P(1), \qquad
\widehat \cK_\bn - \widetilde \cK_\bn = o_P(1).
\end{equation}
\end{lemma}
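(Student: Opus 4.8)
The plan is to handle \eqref{e:W1} and \eqref{e:W2} by a common reduction and then to diverge. First I would rewrite each statistic through the \emph{raw} power sums $\hat m_k$ of the relevant score field. Since the sample means $\hat m_1^Y,\hat m_1^{\widetilde Y},\hat m_1^{\widehat Y}$ are $\cO_P(N^{-1/2})$, expanding the sample-mean centering (purely algebraically, hence without Gaussianity) gives $\cS_\bn=\sqrt N\,\hat m_3^Y-3\,\hat m_2^Y\,(\sqrt N\,\hat m_1^Y)+o_P(1)$ and $\cK_\bn=\sqrt N(\hat m_4^Y-3(\hat m_2^Y)^2)+\cO_P(N^{-1/2})$, and likewise for the tilde- and hat-fields. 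Everything then reduces to controlling $\tfrac{1}{\sqrt N}\sum_{\bs\in R_\bn}(A_\bs^k-B_\bs^k)$ (and the $-3(\hat m_2)^2$ term in the kurtosis case) for the pairs $(A,B)=(Y,\widetilde Y)$ in \eqref{e:W1} and $(A,B)=(\widehat Y,\widetilde Y)$ in \eqref{e:W2}.

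For \eqref{e:W1} I would invoke \eqref{e:momentconvergence}. For the skewness, $\tfrac{1}{\sqrt N}\sum_\bs(Y_\bs^3-\widetilde Y_\bs^3)=\sqrt N\,\E[Y_\bzero^3-\widetilde Y_\bzero^3]+o_P(1)$, and the mean vanishes since odd Gaussian moments are zero; the centering cross-terms are $o_P(1)$ via \eqref{e:momentconvergence} for $k=1,2$ and $\gamma-\tilde\gamma\to0$ (from $\var(Y_\bs-\widetilde Y_\bs)=\cO(H_m(L)^2)$ in \eqref{e:Ytildeconvergence2}), where $\tilde\gamma:=\E\widetilde Y_\bzero^2$. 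For the kurtosis the key is a cancellation of potentially divergent terms: \eqref{e:momentconvergence} with $k=4$ and $\E Y_\bzero^4=3\gamma^2$ give $\tfrac{1}{\sqrt N}\sum_\bs(Y_\bs^4-\widetilde Y_\bs^4)=3\sqrt N(\gamma^2-\tilde\gamma^2)+o_P(1)$, while $k=2$ together with $\hat m_2^Y+\hat m_2^{\widetilde Y}\convP\gamma+\tilde\gamma$ give $3\sqrt N((\hat m_2^Y)^2-(\hat m_2^{\widetilde Y})^2)=3\sqrt N(\gamma^2-\tilde\gamma^2)+o_P(1)$. The two copies of $3\sqrt N(\gamma^2-\tilde\gamma^2)$ cancel, so $\cK_\bn-\widetilde\cK_\bn=o_P(1)$; note that only $L\to\infty$ is used, with no rate linking $L$ to $N$.

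For \eqref{e:W2} the field $\widehat Y$ is nonlinear in the data (estimated filters) and not Gaussian, so \eqref{e:momentconvergence} is unavailable and I must instead quantify the filter error. Eigenvector perturbation under the uniform gap $\alpha_m(\btheta)\ge\beta_m$ of Assumption~\ref{ass:alphas} gives, after phase alignment, $\norm{\hat\varphi_m(\cdot|\btheta)-\varphi_m(\cdot|\btheta)}\le C\norm{\widehat\cF_\btheta^X-\cF_\btheta^X}_\cL/\beta_m$; passing to Fourier coefficients yields $\norm{\hat\phi_\bl-\phi_\bl}\le C\,G(N)/\beta_m$ uniformly in $\bl$, so with $\Delta_\bl:=\hat\phi_\bl-\phi_\bl$ the quantity $\rho_N:=\sum_{\norm{\bl}_\infty\le L}\norm{\Delta_\bl}=\cO_P(L^d G(N))=o_P(1)$ for the chosen $L$ (here $G(N)\convP0$ by Assumption~\ref{ass:Fhat}). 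Writing $D_\bs:=\widehat Y_\bs-\widetilde Y_\bs=\sum_{\norm{\bl}_\infty\le L}\ip{X_{\bs-\bl}}{\Delta_\bl}$ and expanding $\widehat Y_\bs^k-\widetilde Y_\bs^k=\sum_{j=1}^k\binom{k}{j}\widetilde Y_\bs^{k-j}D_\bs^j$, each order $j$ becomes a lag-sum of products of factors $\ip{X_{\bs-\bl_i}}{\Delta_{\bl_i}}$ contracted against a normalized sum $\tfrac{1}{\sqrt N}\sum_\bs\widetilde Y_\bs^{k-j}\,X_{\bs-\bl_1}\otimes\cdots\otimes X_{\bs-\bl_j}$, and carries the factor $\rho_N^{\,j}$. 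For the skewness ($k=3$) every such normalized sum is \emph{centered}, since the Gaussian moments $\E[\widetilde Y_\bzero^{3-j}X_{-\bl_1}\otimes\cdots]$ are odd and vanish; hence each is $\cO_P(1)$ by the second-moment bound in Theorem~\ref{thm:arcones} and, multiplied by $\rho_N^{\,j}=o_P(1)$, gives $\widehat\cS_\bn-\widetilde\cS_\bn=o_P(1)$.

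The kurtosis in \eqref{e:W2} is the main obstacle. Now the relevant Gaussian moments $\E[\widetilde Y_\bzero^{4-j}X\otimes\cdots]$ are even and do \emph{not} vanish, so the normalized sums carry means of order $\sqrt N$ and a naive bound of each order produces a mean-dominated piece $\sim\sqrt N\,\rho_N^{\,j}$ that need not be negligible. The resolution, which I would carry out order by order, is that the $-3(\hat m_2^{\widehat Y})^2$ correction is exactly what is needed: at order $j=1$ it replaces the coefficient $\widetilde Y_\bs^3$ by the centered combination $\widetilde Y_\bs^3-3\tilde\gamma\,\widetilde Y_\bs$, and at order $j=2$ it both centers the coefficient to $\widetilde Y_\bs^2-\tilde\gamma$ and, through its square term, cancels the residual $\sqrt N$-mean. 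Isserlis's theorem shows these combinations have vanishing cross-moments with the $X$'s, so the $\sqrt N$-means cancel and only $\cO_P(\rho_N^{\,j})=o_P(1)$ fluctuations survive. The remaining technical difficulty, shared by both statistics, is the linear ($j=1$) term $\sum_{\norm{\bl}_\infty\le L}\ip{W_\bl}{\Delta_\bl}$ with centered $W_\bl=\tfrac{1}{\sqrt N}\sum_\bs(\cdots)X_{\bs-\bl}$: bounding it by $\rho_N\max_{\norm{\bl}_\infty\le L}\norm{W_\bl}$ requires uniform-in-lag moment control of $\norm{W_\bl}$ (Gaussian hypercontractivity together with the summable covariances underlying Theorem~\ref{thm:arcones}) and the freedom, granted by the statement, to let $L\to\infty$ slowly enough that the maximum over the $\cO(L^d)$ lags does not spoil the rate $L^d G(N)\to0$.
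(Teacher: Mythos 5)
Your proposal is correct and follows essentially the same route as the paper's proof: the same algebraic reduction to raw power sums, the use of \eqref{e:momentconvergence} together with the cancellation of the $3\sqrt N(\gamma^2-\tilde\gamma^2)$ terms for \eqref{e:W1} (which you treat, if anything, more carefully than the paper's claim that each summand vanishes individually), and, for \eqref{e:W2}, isolation of the uniformly small filter errors $\hat\phi_\bl-\phi_\bl$ contracted against normalized Gaussian tensor sums whose $\sqrt N$-means are killed by odd-moment vanishing in the skewness case and by the $-3(\hat m_2^{\widehat Y})^2$ centering via Isserlis in the kurtosis case --- precisely the mechanism behind the paper's decomposition \eqref{e:gaussianestd4}--\eqref{e:gaussianestd1}. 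The only real divergence is that you bound the order-$j$ terms by $\rho_N^{\,j}\max_{\bl}\norm{W_\bl}$, which forces $L$ to grow slowly enough that the maximum over $\cO(L^{jd})$ lags is harmless (admissible under the lemma's ``we can choose $L$'' phrasing), whereas the paper bounds $\sup_\bl\norm{\hat\phi_\bl-\phi_\bl}$ against the expected \emph{sum} of the tensor-sum norms and so obtains the conclusion for every $L$ with $L^d\,G(N)\convP 0$.
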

\begin{proof}
We begin by verifying the first relation in \refeq{W1}. Observe that
\begin{align*}
\cS_\bn - \widetilde \cS_\bn = & \;
\sqrt N \, ( \hat m^{Y}_3 - \hat m^{\widetilde Y}_3 )
- 3 \sqrt{N} \, \hat m_2^Y ( \hat m_1^{Y} - \hat m_1^{\widetilde Y} ) \\
 & \;
+ 3 \sqrt{N} \, \hat m^{\widetilde Y}_1 ( \hat m_2^{\widetilde Y} - \hat m_2^{Y} )
+ 2\sqrt{N} \left( (\hat m_1^Y)^3 - (\hat m_1^{\widetilde Y})^3 \right).
\end{align*}
By Lemma~\ref{lemma:convergence1},
the first three summands converge to zero in probability.
The last summand can be bounded by
$6\sqrt{N} \, | \hat m_1^Y - \hat m_1^{\widetilde Y} | \;
( |\hat m_1^Y| + |\hat m_1^{\widetilde Y}| )^2$,
which converges to zero in probability, too.

The second difference in \refeq{W1} can be handled similarly:
\begin{align*}
\cK_\bn - \widetilde \cK_\bn = & \;
\sqrt N \, ( \hat m^{Y}_4 - \hat m^{\widetilde Y}_4 )
- 4 \sqrt{N} \, \hat m_3^{Y} ( \hat m_1^{Y} - \hat m_1^{\widetilde Y} )
- 4 \sqrt{N} \, \hat m_1^{\widetilde Y} ( \hat m_3^{Y} - \hat m_3^{\widetilde Y} ) \\
& \;
+ 12 \sqrt{N} \, (\hat m_1^{Y})^2 ( \hat m_2^{Y} - \hat m_2^{\widetilde Y} )
+ 12 \sqrt{N} \, \hat m_2^{\widetilde Y} ( (\hat m_1^{Y})^2 - (\hat m_1^{\widetilde Y})^2 ) \\
& \;
- 3 \sqrt{N} \, (\hat m_2^{Y} - \hat m_2^{\widetilde Y})(\hat m_2^{Y} + \hat m_2^{\widetilde Y})
- 6 \sqrt{N} \, ( (\hat m_1^{Y})^4 - (\hat m_1^{\widetilde Y})^4 ).
\end{align*}
Here again, each summand converges to zero in probability.

Next we turn to the first relation in \refeq{W2}.
For this, we first recall the convergence rate for the estimators of the
filter functions.
Under Assumptions~\ref{ass:Fhat} and \ref{ass:alphas}, it holds that
\begin{equation} \label{e:h1}
\max_{\bl \in \mbZ^d} \norm{ \phi_{\bl} - \hat{\phi}_{\bl} }
= \cO(G(N)).
\end{equation}
For an exact proof of this, we refer to \citetext{kuenzer:hormann:kokoszka:2020}.
Now we can establish required  convergence rates of
$\hat m_k^{\widehat Y} - \hat m_k^{\widetilde Y}$ for $k \leq 3$.

For  $k = 1$,
\begin{align*}
\sqrt N ( \hat m_1^{\widehat Y} - \hat m_1^{\widetilde Y} ) = & \;
\frac{1}{\sqrt N} \sum_{\bs \in R_\bn} \sum_{\norm{\bl}_\infty \leq L}
\ip{X_{\bs-\bl}}{\hat \phi_{\bl} - \phi_{\bl}} \\
\E\norm{\sqrt N ( \hat m_1^{\widehat Y} - \hat m_1^{\widetilde Y} )}
 \leq & \;
\E\left[ \norm{\frac{1}{\sqrt N} \sum_{\bs \in R_\bn} X_{\bs}}^2 \right]^{1/2}
\sum_{\norm{\bl}_\infty \leq L} \E\left[ \norm{\hat \phi_{\bl} - \phi_{\bl}}^2 \right]^{1/2} \\
 \leq & \;
\left( \sum_{\bh \in \mbZ^d} \tr(C^X_\bh) \right)^{1/2} \!
\sum_{\norm{\bl}_\infty \leq L} \nu_2(\hat \phi_{\bl} - \phi_{\bl})
 = \cO( \E[ L^d \, G(N) ] ).
\end{align*}
By assumption, this converges to zero.

For  $k = 2$,
\begin{align*}
\left|
\hat m_2^{\widehat Y} - \hat m_2^{\widetilde Y}
\right|
 = & \;
\left|
\frac{1}{N} \sum_{\bs \in R_\bn}
\left( \widehat Y_\bs - \widetilde Y_\bs \right)
\left( \widehat Y_\bs + \widetilde Y_\bs \right)
\right| \\
 = & \;
\left|
\sum_{\norm{\bl}_\infty \leq L}
\ip{
\frac{1}{N} \sum_{\bs \in R_\bn}
( \widehat Y_\bs + \widetilde Y_\bs ) \,
X_{\bs-\bl}}{\hat \phi_{\bl} - \phi_{\bl}}
\right| \\
 \leq & \;
\sup_{\bl \in \mbZ^d} \norm{\hat \phi_{\bl} - \phi_{\bl}}
\sum_{\norm{\bl}_\infty \leq L}
\norm{
\frac{1}{N} \sum_{\bs \in R_\bn}
( \widehat Y_\bs + \widetilde Y_\bs ) \,
X_{\bs-\bl}}.
\end{align*}
It is easy to see that
\begin{align*}
\E\norm{\frac{1}{N} \sum_{\bs \in R_\bn}
( \widehat Y_\bs + \widetilde Y_\bs ) \,X_{\bs-\bl} } \leq & \;
\frac{1}{N} \! \sum_{\bs \in R_\bn}
\nu_2( \widehat Y_\bs + \widetilde Y_\bs ) \, \nu_2( X_{\bs-\bl} )\\
 &=
\Big( \sup_{\bs\in R_\bn} \nu_2( \widehat Y_\bs ) + \nu_2( \widetilde Y_\bzero ) \Big) \,
\nu_2( X_{\bzero} )
\end{align*}
This is uniformly bounded because
$L\to \infty$ and $L^d \, G(N) \convP 0$.
Therefore,
\begin{align*}
\left|
\hat m_2^{\widehat Y} - \hat m_2^{\widetilde Y}
\right|
 = & \;
 \cO_P(  L^d G(N) ),
\end{align*}
which converges to zero in probability.

For  $k = 3$,
\begin{align*}
\sqrt N ( \hat m_3^{\widehat Y} - \hat m_3^{\widetilde Y} ) = & \;
\frac{1}{\sqrt N} \sum_{\bs \in R_\bn}
\left( \widehat Y_\bs - \widetilde Y_\bs \right)^3
+ \frac{3}{\sqrt N} \sum_{\bs \in R_\bn}
\left( \widehat Y_\bs - \widetilde Y_\bs \right)^2 \widetilde Y_\bs \\
 & \;
+ \frac{3}{\sqrt N} \sum_{\bs \in R_\bn}
\left( \widehat Y_\bs - \widetilde Y_\bs \right) \widetilde Y_\bs^2.
\end{align*}
Following the earlier calculations,
we now only need to show uniform boundedness in probability
of the following sums:
\begin{align*}
\norm{\frac{1}{\sqrt N} \! \sum_{\bs \in R_\bn} \!
\widetilde Y_\bs^2 \, X_{\bs-\bl} },
& &
\hsnorm{\frac{1}{\sqrt N} \! \sum_{\bs \in R_\bn} \!
\widetilde Y_\bs \, X_{\bs-\bl} \otimes X_{\bs-\bk} },
& &
\hsnorm{\frac{1}{\sqrt N} \! \sum_{\bs \in R_\bn} \hspace{-3pt}
\left( \widehat Y_\bs - \widetilde Y_\bs \right) X_{\bs-\bl} \otimes X_{\bs-\bk} },
\end{align*}
where $\hsnorm{\cdot}$ denotes the Hilbert--Schmidt norm
defined by $\hsnorm{\Psi} = \iint \psi^2(u, v) du dv$.

The boundedness can be shown along the lines of Lemma~B.3 in
\citetext{gorecki:hhk:2018}.
All that is needed for this is stationarity and Isserlis's theorem.

We can now plug in these properties into
\begin{align*}
\widehat \cS_\bn - \widetilde \cS_\bn = & \;
\sqrt N \, ( \hat m^{\widehat Y}_3 - \hat m^{\widetilde Y}_3 )
- 3 \sqrt{N} \, \hat m_2^{\widehat Y} ( \hat m_1^{\widehat Y} - \hat m_1^{\widetilde Y} ) \\
 & \;
+ 3 \sqrt{N} \, \hat m^{\widetilde Y}_1 ( \hat m_2^{\widetilde Y} - \hat m_2^{\widehat Y} )
+ 2\sqrt{N} \left( (\hat m_1^{\widehat Y})^3 - (\hat m_1^{\widetilde Y})^3 \right)
\end{align*}
and see that it converges to zero in probability.
This concludes the verification of the first bound in \refeq{W2}.

It remains to verify the second bound in \refeq{W2}, i.e.
$\widehat \cK_\bn - \widetilde \cK_\bn = o_P(1)$. Observe that
\begin{align*}
\widehat \cK_\bn - \widetilde \cK_\bn = & \;
\sqrt N \, ( \hat m^{\widehat Y}_4 - \hat m^{\widetilde Y}_4 )
- 4 \sqrt{N} \, \hat m_3^{\widehat Y} ( \hat m_1^{\widehat Y} - \hat m_1^{\widetilde Y} )
- 4 \sqrt{N} \, \hat m_1^{\widetilde Y} ( \hat m_3^{\widehat Y} - \hat m_3^{\widetilde Y} ) \\
& \;
+ 12 \sqrt{N} \, (\hat m_1^{\widehat Y})^2 ( \hat m_2^{\widehat Y} - \hat m_2^{\widetilde Y} )
+ 12 \sqrt{N} \, \hat m_2^{\widetilde Y} ( (\hat m_1^{\widehat Y})^2 - (\hat m_1^{\widetilde Y})^2 ) \\
& \;
- 3 \sqrt{N} \, (\hat m_2^{\widehat Y} - \hat m_2^{\widetilde Y})(\hat m_2^{\widehat Y} + \hat m_2^{\widetilde Y})
- 6 \sqrt{N} \, ( (\hat m_1^{\widehat Y})^4 - (\hat m_1^{\widetilde Y})^4 ).
\end{align*}
We have shown that
\begin{align*}
\sqrt N ( \hat m_1^{\widehat Y} - \hat m_1^{\widetilde Y} ) & \convP 0, &
\hat m_2^{\widehat Y} - \hat m_2^{\widetilde Y} & \convP 0,  &
\sqrt N ( \hat m_3^{\widehat Y} - \hat m_3^{\widetilde Y} ) & \convP 0.
\end{align*}
This implies that
\begin{align*}
\widehat \cK_\bn - \widetilde \cK_\bn = & \;
\sqrt N \, \left(
\hat m^{\widehat Y}_4 - \hat m^{\widetilde Y}_4
- 3 (\hat m_2^{\widehat Y})^2 + 3 (\hat m_2^{\widetilde Y})^2
\right)
+ o_P(1).
\end{align*}
If we denote $\widehat D_\bs = \widehat Y_\bs - \widetilde Y_\bs$,
this term can be decomposed into the following terms
\begin{align}
& \frac{1}{\sqrt N} \sum_{\bs \in R_\bn}
\widehat D_\bs^2
\left(
\widehat D_\bs^2 - 3 \hat m_2^{\widehat D}
\right) \label{e:gaussianestd4} \\
& \frac{4}{\sqrt N} \sum_{\bs \in R_\bn}
\widehat D_\bs \widetilde Y_\bs
\left(
\widehat D_\bs^2 - 3 \hat m_2^{\widehat D}
\right) \label{e:gaussianestd3} \\
& \frac{6}{\sqrt N} \sum_{\bs \in R_\bn}
\left(
\widehat D_\bs^2 \widetilde Y_\bs^2
- \widehat D_\bs^2 \hat m_2^{\widetilde Y}
- 2 \widehat D_\bs \widetilde Y_\bs \hat m_1^{\widehat D \widetilde Y}
\right) \label{e:gaussianestd2} \\
& \frac{4}{\sqrt N} \sum_{\bs \in R_\bn}
\widehat D_\bs \widetilde Y_\bs
\left(
\widetilde Y_\bs^2 - 3 \hat m_2^{\widetilde Y}
\right) \label{e:gaussianestd1}
\end{align}
We will show the convergence to zero in probability
of the first term.
The others will follow by analogy.
The sums can be uniformly bounded in probability
by first isolating the estimated filter functions from the rest
and then taking the expected value of the remaining sums.

To concisely write out the proof, we will rely on tensor notation of higher order.
We rewrite \refeq{gaussianestd4} as follows, isolating the estimated
filter functions from the random field:
\begin{align}
\sum_{\bk_1, \bk_2, \bk_3, \bk_4} \hspace{-3pt}
\Big\langle A_{\bk_1,\bk_2,\bk_3,\bk_4}
\big(
(\hat \phi_{\bk_2} - \phi_{\bk_2}) \otimes
(\hat \phi_{\bk_3} - \phi_{\bk_3}) \otimes
(\hat \phi_{\bk_4} - \phi_{\bk_4}) \big), \,
(\hat \phi_{\bk_1} - \phi_{\bk_1}) \Big\rangle,
\label{e:gaussianestd4rewrite}
\end{align}
where the operator at the center of this expression is defined by
\begin{align*}
A_{\bk_1,\bk_2,\bk_3,\bk_4} =
\frac{1}{N^{3/2}} \sum_{\bs, \bt \in R_\bn} &
\Big(
X_{\bs - \bk_1} \otimes X_{\bs - \bk_2} \otimes X_{\bs - \bk_3} \otimes X_{\bs - \bk_4} \\ &
 - X_{\bs - \bk_1} \otimes X_{\bs - \bk_2} \otimes X_{\bt - \bk_3} \otimes X_{\bt - \bk_4} \\ &
 - X_{\bs - \bk_1} \otimes X_{\bt - \bk_2} \otimes X_{\bs - \bk_3} \otimes X_{\bt - \bk_4} \\ &
 - X_{\bs - \bk_1} \otimes X_{\bt - \bk_2} \otimes X_{\bt - \bk_3} \otimes X_{\bs - \bk_4}
\Big).
\end{align*}
From \refeq{h1}, it follows that
if $\norm{A_{\bk_1,\bk_2,\bk_3,\bk_4}} = \cO_P(1)$,
then \refeq{gaussianestd4} is $o_P(1)$.

Isserlis's theorem implies that for all jointly Gaussian elements
$U_1, U_2, U_3, U_4$,
we have
\begin{align*}
\E\left[U_1 \otimes U_2 \otimes U_3 \otimes U_4\right] =
C_{U_1,U_2} \otimes C_{U_3,U_4}
 + C_{U_1, U_3} \krpr C_{U_4, U_2}
 + C_{U_1, U_4} \krprt C_{U_3, U_2},
\end{align*}
where $ \widetilde \otimes\, $ denotes the Kronecker product
and $ \widetilde \otimes_\top\, $ denotes the transposed Kronecker product
on the operator space.
\[
( A \krpr B )\, C := A C B^*,
\qquad
( A \krprt B )\, C := A \bar C^* \bar B^*,
\]
On simple tensors, they rearrage the order
of the tensor product:
\[
(a \otimes b) \otimes (c \otimes d) =
(a \otimes c) \krpr (b \otimes d) =
(a \otimes d) \krprt (b \otimes c).
\]
For trace-class operators, $\tr(A \krpr B) = \tr(A) \tr(B)$.

We can now take the mean of the operator $A_{\bk_1,\bk_2,\bk_3,\bk_4}$.
Because of stationarity, the summation indices $\bs$ and $\bt$ can be replaced
by the spatial lag $\bh$.
All summands without this lag $\bh$ in the index cancel out.
\begin{align*}
\E[A_{\bk_1,\bk_2,\bk_3,\bk_4}] =
- \frac{1}{\sqrt N} \sum_{\bh} \prod_{i=1}^d \Big(1-\frac{|h_i|}{n_i} & \Big)
\Big( C_{\bk_3 - \bk_1 + \bh} \krpr C_{\bk_2 - \bk_4 - \bh}
 + C_{\bk_4 - \bk_1 + \bh} \krprt C_{\bk_2 - \bk_3 - \bh}
 \\ &
 + C_{\bk_2 - \bk_1 + \bh} \otimes C_{\bk_4 - \bk_3 + \bh}
 + C_{\bk_4 - \bk_1 + \bh} \krprt C_{\bk_2 - \bk_3 + \bh}
 \\ &
 + C_{\bk_2 - \bk_1 + \bh} \otimes C_{\bk_4 - \bk_3 - \bh}
 + C_{\bk_3 - \bk_1 + \bh} \krpr C_{\bk_2 - \bk_4 + \bh}
\Big).
\end{align*}
From the summability of the covariances, we then know that
$\norm{\E[A_{\bk_1,\bk_2,\bk_3,\bk_4}]} = \cO(N^{-1/2})$,
independently from $\bk_1, \bk_2, \bk_3, \bk_4$.

Using the same argument, we also show that
$\E\left[\norm{A_{\bk_1,\bk_2,\bk_3,\bk_4}}^2\right]$ is bounded.
Therefore, \refeq{gaussianestd4} is $o_P(1)$.

Conveniently, the convergence of \refeq{gaussianestd3}
follows easily from the argument on \refeq{gaussianestd4},
by noting that \refeq{gaussianestd3}
can be rewritten as
\[
4 \sum_{\bk_1, \bk_2, \bk_3, \bk_4} \hspace{-3pt}
\Big\langle A_{\bk_1,\bk_2,\bk_3,\bk_4}
\big(
(\hat \phi_{\bk_2} - \phi_{\bk_2}) \otimes
(\hat \phi_{\bk_3} - \phi_{\bk_3}) \otimes
(\hat \phi_{\bk_4} - \phi_{\bk_4}) \big), \,
\phi_{\bk_1} \Big\rangle,
\]
which has the same structure as \refeq{gaussianestd4rewrite}.
Because of the summability of the filter functions $\phi_{\bk_1}$,
\refeq{gaussianestd3} is then $o_P(1)$.
For \refeq{gaussianestd2} and \refeq{gaussianestd1},
this approach can be iterated.

\rightline{\QED}
\end{proof}

\bigskip

\noindent{\sc Proof of Theorem~\ref{t:main}:}
Because of Assumption~\ref{ass:Fhat},
we can estimate the spectral density in a way
such that $G(N) \convP 0$.
To achieve convergence of the estimated scores,
we choose the truncation index
$L = L(N) \xrightarrow{N\to\infty} \infty$
in a way such that
$L^d \, G(N) \convP  0$.
It is then clear that the estimators of the scores are consistent.

Under the assumption that $(X_\bs)$ is a Gaussian random field,
the population scores are also Gaussian.
This means that $Y_{m, \bs}$ and $Y_{m^\prime, \bt}$ (for $m \neq m^\prime$)
are not only uncorrelated but independent.
This independence also holds for the vectors
$( \cS_\bn^{(m)} \; \cK_\bn^{(m)} )^\top$.
Combining this property with Lemma~\ref{lemma:convergence1},
we see that
the vector $( \cS_\bn^{(1)} \; \cK_\bn^{(1)} \; \dots \;
\cS_\bn^{(p)} \; \cK_\bn^{(p)} )^\top$
converges in distribution to a vector of independent normal random variables.
Scaling with the asymptotic variances, it follows that
\[
T_p :=
\sum_{m=1}^p
\frac{ ( \cS_\bn^{(m)} )^2 }{ 6 \, \sigma_{\cS, m}^2 }
+
\frac{ ( \cK_\bn^{(m)} )^2 }{ 24 \, \sigma_{\cK, m}^2 }
\convd \chi^2_{2p}.
\]
Because Lemma~\ref{lemma:gaussianestimation} is also applicable,
$\widehat \cS_\bn^{(m)} - \cS_\bn^{(m)} \convP 0$ and
$\widehat \cK_\bn^{(m)} - \cK_\bn^{(m)} \convP 0$.
The estimators $\hat \sigma_{\cS, m}^2$ and $\hat \sigma_{\cK, m}^2$
of the variances are assumed to be consistent.
We can thus replace the population values with their estimators,
incurring only an asymptotically negligible error, i.e.
\[
| \widehat T_p - T_p | \convP 0
\]
Therefore, $\widehat T_p \convd \chi^2_{2p}$.

\rightline{\QED}

\bigskip

\noindent\textsc{Proof of Proposition~\ref{p:con}:}
We only consider  the case of $\hat\sigma_\cS^2$,
as the calculations for $\hat\sigma_\cK^2$ are similar.
Before we start the calculations,
we want to remark that in this proof
the estimation of the scores, the truncation of the summation
and the bias of the autocovariance estimators
need to be overcome at the same time.
This causes the calculations to be somewhat convoluted. Observe that
\begin{align*}
| \hat \sigma_\cS^2 - \sigma_\cS^2 | & \leq
\hspace{-3pt} \sum_{\norm{\bl}_\infty \leq L^\prime} \hspace{-3pt} | \hat \gamma_\bl^3 - \gamma_\bl^3 |
+ \hspace{-3pt} \sum_{\norm{\bl}_\infty > L^\prime} \hspace{-3pt} |\gamma_\bl|^3 \\
 & \leq
\hspace{-3pt} \sum_{\norm{\bl}_\infty \leq L^\prime} \hspace{-3pt}
| \hat \gamma_\bl - \gamma_\bl |^3
+ 3 \hspace{-3pt} \sum_{\norm{\bl}_\infty \leq L^\prime} \hspace{-3pt}
| \hat \gamma_\bl - \gamma_\bl |^2 |\gamma_\bl|
+ 3 \hspace{-3pt} \sum_{\norm{\bl}_\infty \leq L^\prime} \hspace{-3pt}
| \hat \gamma_\bl - \gamma_\bl | |\gamma_\bl|^2
+ \hspace{-3pt} \sum_{\norm{\bl}_\infty > L^\prime}
\hspace{-3pt} |\gamma_\bl|^3.
\end{align*}
The summability of $\gamma_\bl$ implies that the last sum vanishes if $L^\prime \to \infty$.
We now take the first sum as an example to show
how the convergence of each sum can be shown.
The calculations are similar to the previous proofs in this section,
and we will omit the terms $\hat m_1^{\widehat Y_m}$ and $\hat m_1^{Y_m}$ for notational simplicity,
as they are  asymptotically negligible.
Define
\[
\tilde c_\bl := \frac{1}{N} \hspace{-3pt} \sum_{\bs\in M_{\bh,\bn}} \hspace{-3pt}
\widetilde Y_{\bs+\bl} \widetilde Y_{\bs}
\qquad \text{ and } \qquad
\widehat D_\bs := \widehat Y_\bs - \widetilde Y_\bs.
\]
Then
\begin{align}
\sum_{\norm{\bl}_\infty \leq L^\prime} \hspace{-3pt} | \hat \gamma_\bl - \gamma_\bl |^3
 \leq & \;
\frac{a}{N^3} \hspace{-3pt} \sum_{\norm{\bl}_\infty \leq L^\prime} \hspace{-3pt}
\Big|
\sum_\bs \widehat D_{\bs+\bl} \widehat D_{\bs}
\Big|^3
+ \frac{a}{N^3} \hspace{-3pt} \sum_{\norm{\bl}_\infty \leq L^\prime} \hspace{-3pt}
\Big|
\sum_\bs \widehat D_{\bs+\bl} Y_{\bs}
\Big|^3 \label{e:cov3decomp}\\
&
+ \frac{a}{N^3} \hspace{-3pt} \sum_{\norm{\bl}_\infty \leq L^\prime} \hspace{-3pt}
\Big|
\sum_\bs Y_{\bs+\bl} \widehat D_{\bs}
\Big|^3
+ a \hspace{-3pt} \sum_{\norm{\bl}_\infty \leq L^\prime} \hspace{-3pt}
| \tilde c_\bl - \gamma_\bl |^3,
\nonumber
\end{align}
where $a$ is a fixed finite constant.
Looking at the first sum, we see that
\begin{align*}
&\frac{1}{N^3} \hspace{-3pt} \sum_{\norm{\bl}_\infty \leq L^\prime} \hspace{-3pt}
\Big|
\sum_\bs \widehat D_{\bs+\bl} \widehat D_{\bs}
\Big|^3\\
 & =
\frac{1}{N^3} \hspace{-3pt} \sum_{\norm{\bl}_\infty \leq L^\prime} \hspace{-2pt}
\Big| \hspace{-3pt}
\sum_{\norm{\bi}_\infty \leq L} \sum_{\norm{\bj}_\infty \leq L}
\ip{ (\sum_\bs X_{\bs+\bl-\bi} \otimes X_{\bs-\bj}) (\hat\phi_\bj-\phi_\bj)}{\hat\phi_\bi-\phi_\bi}
\Big|^3 \\
 & \leq
\frac{1}{N^3} \sup_{\bk \in \mbZ^d} \norm{\hat \phi_{\bk} - \phi_{\bk}}^6
\sum_{\norm{\bl}_\infty \leq L^\prime} \hspace{-3pt}
\Bigg(
\sum_{\norm{\bi}_\infty \leq L} \sum_{\norm{\bj}_\infty \leq L}
\Big\| \sum_\bs X_{\bs+\bl-\bi} \otimes X_{\bs-\bj} \Big\|
\Bigg)^3 \\
 & \leq
\Big( L^{d} \sup_{\bk \in \mbZ^d} \norm{\hat \phi_{\bk} - \phi_{\bk}} \Big)^6
\frac{1}{N^3 L^{2d}} \hspace{-3pt}
\sum_{\norm{\bl}_\infty \leq L^\prime} \hspace{-2pt}
\sum_{\norm{\bi}_\infty \leq L} \sum_{\norm{\bj}_\infty \leq L}
\Big\| \sum_\bs X_{\bs+\bl-\bi} \otimes X_{\bs-\bj} \Big\|^3.
\end{align*}
Here we used an inequality of the form
$(\sum_{k=1}^n a_k)^m \leq n^{m-1} \sum_{k=1}^n a_k^m$.
The first factor converges to zero in probability by assumption.
As for the single summands, stationarity implies that
\begin{align*}
&\E\left[
\hsnorm{ \sum_\bs X_{\bs+\bl} \otimes X_{\bs} }^4
\right]\\
 & =
\sum_{\bs_1} \sum_{\bs_2} \sum_{\bs_3} \sum_{\bs_4}
\E\left[
\ip{X_{\bs_1+\bl}}{X_{\bs_2+\bl}} \ip{X_{\bs_2}}{X_{\bs_1}}
\ip{X_{\bs_3+\bl}}{X_{\bs_4+\bl}} \ip{X_{\bs_4}}{X_{\bs_3}}
\right] \\
&  = \cO(N^2 + N^4 \tr(C_\bl)^4).
\end{align*}
It directly follows that
\begin{align*}
\E\Bigg[
\frac{1}{N^3 L^{2d}} \hspace{-3pt}
\sum_{\norm{\bl}_\infty \leq L^\prime} \hspace{-2pt}
\sum_{\norm{\bi}_\infty \leq L} \sum_{\norm{\bj}_\infty \leq L}
\Big\| \sum_\bs X_{\bs+\bl-\bi} \otimes X_{\bs-\bj} \Big\|^3
\Bigg]
 & = \cO((L^\prime)^d N^{-3/2} + \min(L^\prime/L, 1)^d),
\end{align*}
and in particular we can see that the expression is bounded.
Since we assume that $L^d G(N) \convP 0$,
it holds that
$\frac{1}{N^3} \hspace{-3pt} \sum_{\norm{\bl}_\infty \leq L^\prime} \hspace{-3pt}
\Big|
\sum_\bs \widehat D_{\bs+\bl} \widehat D_{\bs}
\Big|^3 \convP 0$.
The two other sums in \refeq{cov3decomp} can be bounded in a similar way.

What remains to show is that
$\sum_{\norm{\bl}_\infty \leq L^\prime} \hspace{-3pt}
| \tilde c_\bl - \gamma_\bl |^3 \convP 0$.
In the following calculations, we will use the notation
$c_\bl := \frac{1}{N} \hspace{-3pt} \sum_{\bs\in M_{\bh,\bn}} \hspace{-3pt}
Y_{\bs+\bl} Y_{\bs}$.
Note that these ``estimators" still exhibits a bias
because the sum divided by $N$ which is not the number of summands.
Because by assumption $L^\prime / \min n_i \to 0$, it follows that
the accumulated estimation bias converges to zero and
\begin{align*}
\E\left[
\sum_{\norm{\bl}_\infty \leq L^\prime} \hspace{-3pt} | c_\bl - \gamma_\bl |^3
\right]
\xrightarrow{N\to\infty} 0.
\end{align*}
Because we assume that $(L^{\prime})^{d/3} H_m(L) \to 0$,
it follows via Theorem~\ref{thm:isserlis} and
\refeq{Ytildeconvergence2} that
\begin{align*}
\E\left[
\sum_{\norm{\bl}_\infty \leq L^\prime} \hspace{-3pt} | c_\bl - \tilde c_\bl |^3
\right]
= \cO\Big( (L^\prime)^d \;
H_m(L)^3 \, \Big)
\xrightarrow{N\to\infty} 0.
\end{align*}

This shows that $| \hat \sigma_\cS^2 - \sigma_\cS^2 | \convP 0$.
The calculations for $\hat \sigma_\cK^2$ are analogous.
Note that since $(L^{\prime})^{d/4} \leq (L^{\prime})^{d/3}$,
the requirements on $L^\prime$ that stem from $\hat \sigma_\cK^2$
are fulfilled a fortiori.
\rightline{\QED}

\end{document}